\documentclass[runningheads]{llncs}
\usepackage{graphicx}
\usepackage{fullpage}
%

%

\usepackage{times}
\usepackage{xcolor}
\usepackage{helvet}  
\usepackage{courier}  
\usepackage[hyphens]{url}  
\usepackage{graphicx} 
\urlstyle{rm} 
\usepackage[round]{natbib}  
\usepackage{caption} 
\frenchspacing  
\setlength{\pdfpagewidth}{8.5in} 
\setlength{\pdfpageheight}{11in} 
%
\usepackage{algorithm}
\usepackage{algorithmic}

\usepackage{comment}
\usepackage{booktabs}
\usepackage{amsfonts}
\usepackage{amsmath}
\usepackage{mathtools}
\usepackage[inline]{enumitem}
\usepackage{subfig}
\usepackage{tikz}
\usetikzlibrary{arrows}
%
\usepackage{newfloat}
\usepackage{listings}
\DeclareCaptionStyle{ruled}{labelfont=normalfont,labelsep=colon,strut=off} 
\lstset{%
	basicstyle={\footnotesize\ttfamily},
	numbers=left,numberstyle=\footnotesize,xleftmargin=2em,
	aboveskip=0pt,belowskip=0pt,%
	showstringspaces=false,tabsize=2,breaklines=true}
\floatstyle{ruled}
\newfloat{listing}{tb}{lst}{}
\floatname{listing}{Listing}

\newtheorem{numclaim}{Claim}


\newcommand{\N}{\mathbb{N}}

\newcommand{\abs}[1]{|#1|}
\newcommand{\upto}[1]{[#1]}

\newcommand{\Var}{\mathit{Var}}
\newcommand{\valFunc}[2]{V(#1,#2)}
\newcommand{\model}{v}


\newcommand{\prop}{\mathcal{P}}
\newcommand{\operators}{\Lambda}
\DeclareMathOperator{\lX}{\mathbf{X}}
\DeclareMathOperator{\lF}{\mathbf{F}}
\DeclareMathOperator{\lG}{\mathbf{G}}
\DeclareMathOperator{\lU}{\mathbf{U}}
\DeclareMathOperator{\limplies}{\rightarrow}
\newcommand{\lfalse}{\mathit{false}}
\newcommand{\ltrue}{\mathit{true}}

\newcommand{\dfa}{\mathcal{A}}
\newcommand{\states}{Q}
\newcommand{\final}{F}
\newcommand{\trans}[2]{\delta({#1,#2})}
\newcommand{\lang}[1]{L({#1})}

\newcommand{\positives}{P}
\newcommand{\negatives}{N}
\newcommand{\discardpile}{D}
\newcommand{\bound}{n}
\newcommand{\horizon}{K}
\newcommand{\nrep}{\bound\textrm{-description}}
\newcommand{\dfaencoding}{\Phi^{\dfa}}
\newcommand{\pref}[1]{\mathit{Pref}(#1)}
\newcommand{\ltlencoding}{\Psi^{\varphi,\negatives}}
\newcommand{\ltlencodingce}{\Omega^{\negatives,\discardpile}}

\newcommand{\ceDfa}{\texttt{CEG\textsubscript{DFA}}}
\newcommand{\hybridDfa}{\texttt{S-SYM\textsubscript{DFA}}}
\newcommand{\symbolicDfa}{\texttt{SYM\textsubscript{DFA}}}
\newcommand{\ceLtl}{\texttt{CEG\textsubscript{LTL}}}
\newcommand{\hybridLtl}{\texttt{S-SYM\textsubscript{LTL}}}
\newcommand{\symbolicLtl}{\texttt{SYM\textsubscript{LTL}}}

\immediate\write18{make -C resources/; cp -rTl resources/plots .} 

\begin{document}
\title{Learning Interpretable Temporal Properties from Positive Examples Only}
%
%
\author{
	Rajarshi Roy\inst{1}
	\and
	Jean-Rapha\"el Gaglione\inst{2}
	\and
	Nasim Baharisangari\inst{3}
	\and
	Daniel Neider\inst{4,5}\thanks{Part of this work has been conducted when the author was at the Carl von Ossietzky University of Oldenburg, Germany}
	\and
	Zhe Xu\inst{3}
	\and
	Ufuk Topcu\inst{2}
}
\authorrunning{R. Roy et al.}
%
\institute{
	Max Planck Institute for Software Systems, Kaiserslautern, Germany
	\and
	University of Texas at Austin, Texas, USA
	\and
	Arizona State University, Arizona, USA
	\and
	TU Dortmund University, Dortmund, Germany
	\and
	Center for Trustworthy Data Science and Security, University Alliance Ruhr, Germany
}
\maketitle              
\begin{abstract}
We consider the problem of explaining the temporal behavior of black-box systems using human-interpretable models.
Following recent research trends, we rely on the fundamental yet interpretable models of \emph{deterministic finite automata} (DFAs) and \emph{linear temporal logic} (LTL\textsubscript{f}) formulas.
In contrast to most existing works for learning DFAs and LTL\textsubscript{f} formulas, we consider learning from only positive examples.
Our motivation is that negative examples are generally difficult to observe, in particular, from black-box systems. 
To learn meaningful models from positive examples only, we design algorithms that rely on \emph{conciseness} and \emph{language minimality} of models as regularizers.
Our learning algorithms are based on two approaches: a symbolic and a counterexample-guided one. 
The symbolic approach exploits an efficient encoding of language minimality as a constraint satisfaction problem, whereas the counterexample-guided one relies on generating suitable negative examples to guide the learning.
Both approaches provide us with effective algorithms with minimality guarantees on the learned models.
To assess the effectiveness of our algorithms, we evaluate them on a few practical case studies.
\keywords{One-class learning  \and Automata learning \and Learning of logic formulas.}
\end{abstract}



\section{Introduction}

The recent surge of complex black-box systems in Artificial Intelligence has increased the demand for designing simple explanations of systems for human understanding.
Moreover, in several areas such as robotics, healthcare, and transportation~\citep{royalsociety19,scirobotics,molnar2022}, inferring human-interpretable models has become the primary focus to promote human trust in systems.

To enhance the interpretability of systems, we aim to explain their temporal behavior.
For this purpose, models that are typically employed include, among others, finite state machines and temporal logics~\citep{WeissGY18,0002FN20}.
Our focus is on two fundamental models: deterministic finite automata (DFAs)~\citep{RabinS}; and formulas in the de facto standard temporal logic: linear temporal logic (LTL)~\citep{Pnueli77}.
These models not only possess a host of desirable theoretical properties, but also feature easy-to-grasp syntax and intuitive semantics.
The latter properties make them particularly suitable as interpretable models with many applications, e.g., as task knowledge for robotic agents~\citep{KasenbergS17,Memarian00WT20}, as a formal specification for verification~\citep{LemieuxPB15}, as behavior classifier for unseen data~\citep{ShvoLIM21}, and several others~\citep{CamachoM19}.

The area of learning DFAs and LTL formulas is well-studied with a plethora of existing works (see related work).
Most of them tackle the typical binary classification problem~\citep{Gold78} of learning concise DFAs or LTL formulas from a finite set of examples partitioned into a positive and a negative set. 
However, negative examples are hard to obtain in my scenarios.
In safety-critical areas, often observing negative examples from systems (e.g., from medical devices and self-driving cars) can be unrealistic (e.g., by injuring patients or hitting pedestrians).
Further, often one only has access to a black-box implementation of the system and thus, can extract only its possible (i.e., positive) executions.

In spite of being relevant, the problem of learning concise DFAs and LTL formulas from positive examples, i.e., the corresponding \emph{one class classification} (OCC) problem, has garnered little attention.
The primary reason, we believe, is that, like most OCC problems, this problem is an ill-posed one.
Specifically, a concise model that classifies all the positive examples correctly is the trivial model that classifies all examples as positive.
This corresponds to a single state DFA or, in LTL, the formula $\ltrue$.
These models, unfortunately, convey no insights about the underlying system.

To ensure a well-defined problem, \citet{AvellanedaP18}, who study the OCC problem for DFAs, propose the use of the (accepted) \emph{language} of a model (i.e., the set of allowed executions) as a regularizer.
Searching for a model that has minimal language, however, results in one that classifies only the given examples as positive. 
To avoid this overfitting, they additionally impose an upper bound on the size of the model.
Thus, the OCC problem that they state is the following: given a set of positive examples $\positives$ and a size bound $\bound$, learn a DFA that 
\begin{enumerate*}[label=(\alph*)]
	\item classifies $\positives$ correctly, 
	\item has size at most $\bound$, and
	\item is language minimal.
\end{enumerate*}
For language comparison, the order chosen is set inclusion.

To solve this OCC problem, \citet{AvellanedaP18} then propose a \emph{counterexample-guided} algorithm.
This algorithm relies on generating suitable negative examples (i.e., counterexamples) iteratively to guide the learning process.
Since only the negative examples dictate the algorithm, in many iterations of their algorithm, the learned DFAs do not have a language smaller (in terms of inclusion) than the previous hypothesis DFAs.
This results in searching through several unnecessary DFAs.

To alleviate this drawback, our first contribution is a \emph{symbolic} algorithm for solving the OCC problem for DFA.
Our algorithm converts the search for a language minimal DFA symbolically to a series of satisfiability problems in Boolean propositional logic, eliminating the need for counterexamples.
The key novelty is an efficient encoding of the language inclusion check of DFAs in a propositional formula, which is polynomial in the size of the DFAs.
We then exploit an off-the-shelf SAT solver to check satisfiability of the generated propositional formulas and, thereafter, construct a suitable DFA.
We expand on this algorithm in Section~\ref{sec:dfa-learner}. 

We then present two novel algorithms for solving the OCC problem for formulas in LTL\textsubscript{f} (LTL over finite traces).
While our algorithms extend smoothly to traditional LTL (over infinite traces), our focus here is on LTL\textsubscript{f} due to its numerous applications in AI~\citep{GiacomoV13}.
Also, LTL\textsubscript{f} being a strict subclass of DFAs, the learning algorithms for DFAs cannot be applied directly to learn LTL\textsubscript{f} formulas.

Our first algorithm for LTL\textsubscript{f} is a \emph{semi-symbolic} algorithm, which combines ideas from both the symbolic and the counterexample-guided approaches.
Roughly, this algorithm exploits negative examples to overcome the theoretical difficulties of symbolically encoding language inclusion for LTL\textsubscript{f}. (LTL\textsubscript{f} inclusion check is known to be inherently harder than that for DFAs~\citep{SistlaC85}).
Our second algorithm is simply a counterexample-guided algorithm that relies solely on the generation of negative examples for learning.
Section~\ref{sec:ltl-learner} details both algorithms.

To further study the presented algorithms, we empirically evaluate them in several case studies.
We demonstrate that our symbolic algorithm solves the OCC problem for DFA in fewer (approximately one-tenth) iterations and runtime comparable to the counterexample-guided algorithm, skipping thousands of counterexample generations. 
Further, we demonstrate that our semi-symbolic algorithm solves the OCC problem for LTL\textsubscript{f} (in average) 
thrice as fast as
the counterexample-guided algorithm. 
All our experimental results can be found in Section~\ref{sec:experiments}.

%
%
%

\subsubsection{Related Work.}
The OCC problem described in this paper belongs to the body of works categorized as passive learning~\citep{Gold78}.
As alluded to in the introduction, in this topic, the most popular problem is the binary classification problem for learning DFAs and LTL formulas.
Notable works include the works by~\citet{BiermannF72,GrinchteinLP06,HeuleV10} for DFAs and~\citet{NeiderG18,CamachoM19,RahaRFN22} for LTL/LTL\textsubscript{f}.

The OCC problem of learning formal models from positive examples was first studied by~\citet{Gold67}.
This work showed that the exact identification  (in the limit) of certain models (that include DFAs and LTL\textsubscript{f} formulas) from positive examples is not possible.
Thereby, works have mostly focussed on models that are learnable easily from positive examples, such as pattern languages~\citep{Angluin198046}, stochastic finite state machines~\citep{CarrascoO99}, and hidden Markov Models~\citep{Stolcke92}.
None of these works considered learning DFAs or LTL formulas, mainly due to the lack of a meaningful regularizer.

Recently, \citet{AvellanedaP18} proposed the use of language minimality as a regularizer and, thereafter, developed an effective algorithm for learning DFAs.
While their algorithm cannot overcome the theoretical difficulties shown by~\citet{Gold67}, they still produce a DFA that is a concise description of the positive examples.
We significantly improve upon their algorithm by relying on a novel encoding of language minimality using propositional logic.

For temporal logics, there are a few works that consider the OCC problem. 
Notably, \citet{EhlersGN20} proposed a learning algorithm for a fragment of LTL which permits a representation known as universally very-weak automata (UVWs).
However, since their algorithm relies on UVWs, which has strictly less expressive power than LTL, it cannot be extended to full LTL.
Further, there are works on learning LTL~\citep{ChouOB22} and STL~\citep{JhaTSSS19} formulas from trajectories of high-dimensional systems.
These works based their learning on the assumption that the underlying system optimizes some cost functions.
Our method, in contrast, is based on the natural notion of language minimality to find tight descriptions, without any assumptions on the system.
There are some other works that consider the OCC problem for logics similar to temporal logic~\citep{Xuitl,Xugtl,SternJ17}.

A problem similar to our OCC problem is studied in the context of inverse reinforcement learning (IRL) to learn temporal rewards for RL agents from (positive) demonstrations.
For instance, \citet{KasenbergS17} learn concise LTL formulas that can distinguish between the provided demonstrations from random executions of the system.
To generate the random executions, they relied on a Markov Decision Process (MDP) implementation of the underlying system.
Our regularizers, in contrast, assume the underlying system to be a black-box and need no access to its internal mechanisms.
\citet{Vazquez-Chanlatte18} also learn LTL-like formulas from demonstrations. 
Their search required a pre-computation of the lattice of formulas induced by the subset order, which can be a bottleneck for scaling to full LTL.
Recently, \citet{HasanbeigJAMK21} devised an algorithm to infer automaton for describing high-level objectives of RL agents.
Unlike ours, their algorithm relied on user-defined hyper-parameters to regulate the degree of generalization of the inferred automaton.

\section{Preliminaries}
\label{sec:prelims}

In this section, we set up the notation for the rest of the paper.

Let $\N = \{1,2,\ldots\}$ be the set of natural numbers and $\upto{n}=\{1,2,\ldots,n\}$ be the set of natural numbers up to $n$.

\subsubsection{Words and Languages.}
To formally represent system executions, we rely on the notion of words, defined over a finite and nonempty \emph{alphabet} $\Sigma$.
The elements of $\Sigma$, which denote relevant system states, are referred to as \emph{symbols}.

A \emph{word} over $\Sigma$ is a finite sequence $w = a_1 \ldots a_n$ with $a_i \in \Sigma$, $i \in \upto{n}$.
The \emph{empty word} $\varepsilon$ is the empty sequence. 
The length $\abs{w}$ of $w$ is the number of symbols in it (note that $|\varepsilon| = 0$).
Moreover, $\Sigma^\ast$ denotes the set of all words over $\Sigma$.
Further, we use $w[i] = a_i$ to denote the $i$-th symbol of $w$ and $w[i{:}]=a_i\cdots a_{n}$ to denote the suffix of $w$ starting from position $i$.

A \emph{language} $L$ is any set of words from $\Sigma^\ast$.
We allow the standard set operations on languages such as $L_1\subseteq L_2$, $L_1\subset L_2$ and $L_1\setminus L_2$.

\subsubsection{Propositional logic.} 
All our algorithms rely on propositional logic and thus, we introduce it briefly.
Let $\mathit{Var}$ be a set of propositional variables, which take Boolean values $\{0,1\}$ ($0$ represents $\lfalse$, $1$ represents $\ltrue$). 
Formulas in propositional logic---which we denote by capital Greek letters---are defined recursively as 
\[\Phi \coloneqq x\in\Var \mid \neg \Phi \mid \Phi \lor \Phi.\]
As syntax sugar, we allow the following standard formulas: $\ltrue$, $\lfalse$, $\Phi\land\Psi$, $\Phi\rightarrow\Psi$, and $\Phi\leftrightarrow\Psi$. 

An \emph{assignment} $v\colon \Var \mapsto \{0,1\}$ maps propositional variables to Boolean values.
Based on an assignment $v$, we define the semantics of propositional logic using a valuation function $\valFunc{v}{\Phi}$, which is inductively defined as follows: 
\begin{align*}
\valFunc{v}{x}&=v(x)\\
\valFunc{v}{\neg\Psi}&= 1-\valFunc{v}{\Psi}\\ 
\valFunc{v}{\Psi\lor\Phi}&= \mathit{max}\{\valFunc{v}{\Psi} ,\valFunc{v}{\Phi}\}
\end{align*}
We say that $v$ satisfies $\Phi$ if  $\valFunc{v}{\Phi}=1$, and call $v$ a model of $\Phi$.
A formula $\Phi$ is satisfiable if there exists a model $v$ of $\Phi$.

Arguably, the most well-known problem in propositional logic---the satisfiability (SAT) problem---is the problem of determining whether a propositional formula is satisfiable or not.
With the rapid development of SAT solvers~\citep{sat-2021}, checking satisfiability of formulas with even millions of variables is feasible.
Most solvers can also return a model when a formula is satisfiable.

\section{Learning DFA from Positive Examples}
\label{sec:dfa-learner}

In this section, we present our symbolic algorithm for learning DFAs from positive examples. 
We begin by formally introducing DFAs.

A \emph{deterministic finite automaton} (DFA) is a tuple $\dfa = (\states,\Sigma,\delta,q_I,\final)$ where $\states$ is a finite set of states, $\Sigma$ is the alphabet, $q_I \in \states$ is the initial state,
$\final \subseteq \states$ is the set of final states, and $\delta \colon \states \times \Sigma \to \states$ is the transition function.
We define the size $\abs{\dfa}$ of a DFA as its number of states $\abs{Q}$.

Given a word $w=a_1 \ldots a_{n}\in\Sigma^\ast$, the run of $\dfa$ on $w$, denoted by $\dfa\colon q_1\xrightarrow{w}q_{n+1}$, is a sequence of states and symbols $q_1a_1q_2a_2\cdots a_{n}q_{n+1}$, such that $q_1=q_{I}$ and for each $i\in\upto{n}$, $q_{i+1} = \trans{q_i}{a_i}$.
Moreover, we say $w$ is accepted by $\dfa$ if the last state in the run $q_{n+1}\in\final$.
Finally, we define the language of $\dfa$ as $\lang{\dfa} = \{ w\in\Sigma^\ast~|~w\text{ is accepted by }\dfa \}$.

To introduce the OCC problem for DFAs, we first describe the learning setting. 
The OCC problem relies on a set of positive examples, which we represent using a finite set of words $\positives\subset\Sigma^\ast$.
Additionally, the problem requires a bound $\bound$ to restrict the size of the learned DFA.
The role of this size bound is two-fold:
\renewcommand\labelenumi{(\theenumi)}
\begin{enumerate*}
	\item it ensures that the learned DFA does not overfit $\positives$; and
	\item using a suitable bound, one can enforce the learned DFAs to be concise and, thus, interpretable.
\end{enumerate*}

Finally, we define a DFA $\dfa$ to be an $\nrep$ of $\positives$ if $\positives\subseteq \lang{\dfa}$ and $\abs{\dfa}\leq \bound$.
When $P$ is clear from the context, we simply say $\mathcal{A}$ is an $\nrep$.

We can now state the OCC problem for DFAs:
\begin{problem}[OCC problem for DFAs]\label{prob:occ-dfa}
	Given a set of positive words $\positives$ and a size bound $\bound$,
	learn a DFA $\dfa$ such that:
	\renewcommand\labelenumi{(\theenumi)}
	\begin{enumerate*}
		\item $\dfa$ is an $\nrep$; and
		\item for every DFA $\dfa'$ that is an $\nrep$, $L(\dfa')\not\subset L(\dfa)$.
	\end{enumerate*}
	
\end{problem}
Intuitively, the above problem asks to search for a DFA that is an $\nrep$ and has a minimal language.
Note that several such DFAs can exist since the language inclusion is a partial order on the languages of DFA.
We, here, are interested in learning only one such DFA, leaving the problem of learning all such DFAs as interesting future work.

\subsection{The Symbolic Algorithm}\label{sec:sym-dfa}
We now present our algorithm for solving Problem~\ref{prob:occ-dfa}.
Its underlying idea is to reduce the search for an appropriate DFA to a series of satisfiability checks of propositional formulas.
Each satisfiable propositional formula enables us to construct a guess, or a so-called \emph{hypothesis} DFA $\dfa$.
In each step, using the hypothesis $\dfa$, we construct a propositional formula $\Phi^{\dfa}$ to search for the next hypothesis $\dfa'$ with a language smaller (in the inclusion order) than the current one.
The properties of the propositional formula $\dfaencoding$ we construct are:
\renewcommand\labelenumi{(\theenumi)}
\begin{enumerate*}
	\item $\dfaencoding$ is satisfiable if and only if there exists a DFA $\dfa'$ that is an $\nrep$ and $\lang{\dfa'}\subset \lang{\dfa}$; and
	\item based on a model $\model$ of $\dfaencoding$, one can construct a such a DFA $\dfa'$.
\end{enumerate*}

Based on the main ingredient $\dfaencoding$, we design our learning algorithm as sketched in Algorithm~\ref{alg:symbolic-dfa}.
Our algorithm initializes the hypothesis DFA $\dfa$ to be $\dfa_{\Sigma^\ast}$, which is the one-state DFA that accepts all words in $\Sigma^\ast$.
Observe that $\dfa_{\Sigma^\ast}$ is trivially an $\nrep$, since $\positives\subset\Sigma^\ast$ and $\abs{\dfa_{\Sigma^\ast}}=1$.
The algorithm then iteratively exploits $\dfaencoding$ to construct the next hypothesis DFAs, until $\dfaencoding$ becomes unsatisfiable.
Once this happens, we terminate and return the current hypothesis $\dfa$ as the solution.
The correctness of this algorithm follows from the following theorem:
\begin{theorem}\label{thm:dfa-algo-correctness}
	Given positive words $\positives$ and a size bound $\bound$, Algorithm~\ref{alg:symbolic-dfa} always learns a DFA $\dfa$ that is an $\nrep$ and for every DFA $\dfa'$ that is an $\nrep$, $\lang{\dfa'}\not\subset\lang{\dfa}$.
\end{theorem}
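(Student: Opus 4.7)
The plan is to establish two invariants maintained by Algorithm~\ref{alg:symbolic-dfa} and then combine them with a termination argument. The invariants are: (I1) every hypothesis $\dfa$ maintained by the algorithm is an $\nrep$ of $\positives$; and (I2) whenever a new hypothesis $\dfa'$ replaces the current one $\dfa$, we have $\lang{\dfa'}\subsetneq\lang{\dfa}$. Both invariants follow directly from the two stated properties of $\dfaencoding$: property~(1) guarantees that satisfiability of $\dfaencoding$ implies the existence of a strictly smaller $\nrep$, and property~(2) guarantees that decoding a satisfying assignment actually yields such an $\nrep$. For the base case of (I1), the initial hypothesis $\dfa_{\Sigma^\ast}$ is trivially an $\nrep$, as noted right before the theorem statement.

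Next, I would argue termination. Up to isomorphism, there are only finitely many DFAs with state set drawn from a fixed pool of at most $\bound$ states over the fixed alphabet $\Sigma$, so the set of languages $\{\lang{\dfa'} : \dfa' \text{ is an } \nrep\}$ is finite. By invariant (I2), the sequence of hypothesis languages produced by the algorithm forms a strictly decreasing chain in this finite set under set inclusion, hence must be finite. Therefore, after finitely many iterations, $\dfaencoding$ is unsatisfiable and the loop terminates.

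Finally, suppose the algorithm terminates with hypothesis $\dfa$. By (I1), $\dfa$ is an $\nrep$, establishing the first part of the conclusion. For the second part, since the loop exits precisely when $\dfaencoding$ is unsatisfiable, property~(1) of $\dfaencoding$ tells us that there is no DFA $\dfa'$ which is simultaneously an $\nrep$ and satisfies $\lang{\dfa'}\subsetneq\lang{\dfa}$. This is exactly the language-minimality condition in the theorem, completing the proof.

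The main obstacle I anticipate is not in the high-level structure, which is quite clean once the properties of $\dfaencoding$ are granted, but rather in justifying termination rigorously: one must be careful that ``strictly decreasing chain under $\subsetneq$'' really does terminate here, since languages of DFAs can in principle be infinite sets. The key observation that makes this go through is that the size bound $\bound$ restricts us to finitely many DFAs (and hence finitely many languages) up to isomorphism, so there is no infinite strictly descending chain of such languages. The rest of the argument is bookkeeping around the two advertised properties of the propositional encoding $\dfaencoding$, whose construction is the substantive work deferred to the remainder of Section~\ref{sec:sym-dfa}.
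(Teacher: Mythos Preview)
Your proposal is correct and follows essentially the same approach as the paper's own proof: termination via finiteness of $\nrep$s together with the strict-inclusion property of successive hypotheses, and language minimality via the contrapositive of property~(1) of $\dfaencoding$ at the point where it becomes unsatisfiable. Your version is somewhat more explicit in separating out the two invariants and in phrasing termination as a strictly descending chain argument, but the substance is identical.
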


\begin{algorithm}[tb]
	\caption{Symbolic Algorithm for Learning DFA}\label{alg:symbolic-dfa}
	\textbf{Input}: Positive words~$P$, bound~$\bound$
	\begin{algorithmic}[1]
		\STATE $\dfa\gets\dfa_{\Sigma^\ast}$, $\dfaencoding \coloneqq \Phi_{\mathtt{DFA}}\wedge\Phi_{\positives}$
		
		\WHILE{$\dfaencoding$ is satisfiable (with model $\model$)}
		\STATE $\dfa \gets$ DFA constructed from $v$
		\STATE $\dfaencoding \coloneqq \Phi_{\mathtt{DFA}}\wedge\Phi_{\positives} \wedge \Phi_{\subseteq\dfa} \wedge \Phi_{\not\supseteq\dfa}$\label{line:dfaencoding}
		\ENDWHILE
		\RETURN $\dfa$
	\end{algorithmic}
\end{algorithm}

We now expand on the construction of $\dfaencoding$. 
To achieve the aforementioned properties, we define $\dfaencoding$ as follows: 
\begin{align}\label{eq:dfa-encoding}
\dfaencoding \coloneqq \Phi_{\mathtt{DFA}} \wedge \Phi_{\positives} \wedge \Phi_{\subseteq\dfa} \wedge \Phi_{\not\supseteq\dfa}
\end{align}
The first conjunct $\Phi_{\mathtt{DFA}}$ ensures that the propositional variables we will use encode a valid DFA $\dfa'$. 
The second conjunct $\Phi_{\positives}$ ensures that $\dfa'$ accepts all positive words.
The third conjunct $\Phi_{\subseteq\dfa}$ ensures that $\lang{\dfa'}$ is a subset of $\lang{\dfa}$.
The final conjunct $\Phi_{\not\supseteq\dfa}$ ensures that $\lang{\dfa'}$ is not a superset of $\lang{\dfa}$.
Together, conjuncts $\Phi_{\subseteq\dfa}$ and $\Phi_{\not\supseteq\dfa}$  ensure that $\lang{\dfa'}$ is a proper subset of $\lang{\dfa}$.
In what follows, we detail the construction of each conjunct.

To encode the hypothesis DFA $\dfa'=(Q',\Sigma,\delta',{q'_I},\final')$ symbolically, following~\citet{HeuleV10}, we rely on the propositional variables:
\renewcommand\labelenumi{(\theenumi)}
\begin{enumerate*}
	\item $d_{p,a,q}$ where $p,q\in\upto{\bound}$ and $a\in\Sigma$; and
	\item $f_q$ where $q\in\upto{\bound}$.
\end{enumerate*}
The variables $d_{p,a,q}$ and $f_q$ encode the transition function $\delta'$ and the final states $F'$, respectively, of $\dfa'$.
Mathematically speaking, if $d_{p,a,q}$ is set to true, then $\delta'(p,a)=q$ and if $f_q$ is set to true, then $q\in\final'$.
Note that we identify the states $\states'$ using the set $\upto{n}$ and the initial state ${q'_{I}}$ using the numeral 1.

Now, to ensure $\dfa'$ has a deterministic transition function $\delta'$, $\Phi_{\mathtt{DFA}}$ asserts the following constraint:
\begin{align}
\bigwedge_{p\in\upto{\bound}}\bigwedge_{a\in\Sigma}\Big[\bigvee_{q\in\upto{\bound}} d_{p,a,q} \wedge \bigwedge_{q\neq q'\in\upto{\bound}} \big[\neg d_{p,a,q} \vee \neg  d_{p,a,q'}\big]\Big]\label{eq:deter}
\end{align}

Based on a model $v$ of the variables $d_{p,a,q}$ and $f_q$, we can simply construct $\dfa'$. 
We set $\delta'(p,a)$ to be the unique state $q$ for which $v(d_{p,a,q})=1$ and $q\in\final'$ if $v(f_q)=1$.  

Next, to construct conjunct $\Phi_{\positives}$, we introduce variables $x_{u,q}$ where $u\in\pref{P}$ and $q\in\upto{\bound}$, which track the run of $\dfa'$ on all words in $\pref{P}$, which is the set of prefixes of all words in $\positives$.
Precisely, if $x_{u,q}$ is set to true, then there is a run of $\dfa'$ on $u$ ending in state $q$, i.e., $\dfa'\colon {q'_{I}}\xrightarrow{u}q$.

Using the introduced variables,  $\Phi_{\positives}$ ensures that the words in $\positives$ are accepted by imposing the following constraints:
\begin{align}
x_{\varepsilon,1} \wedge \bigwedge_{q\in\{2,\ldots,\bound\}} \neg x_{\varepsilon,q}\label{eq:init_state}\\
\bigwedge_{ua\in\pref{P}} \bigwedge_{p,q\in\upto{\bound}} [x_{u,p} \wedge d_{p,a,q}]\limplies x_{ua,q}\label{eq:transtion}\\
\bigwedge_{w\in\positives} \bigwedge_{q\in \upto{\bound}} x_{w,q} \limplies f_q\label{eq:final-state}
\end{align}
The first constraint above ensures that the runs start in the initial state ${q'_I}$ (which we denote using 1) while the second constraint ensures that they adhere to the transition function.
The third constraint ensures that the run of $\dfa'$ on every $w\in\positives$ ends in a final state and is, hence, accepted.

For the third conjunct $\Phi_{\subseteq\dfa}$, we must track the synchronized runs of the current hypothesis $\dfa$ and the next hypothesis $\dfa'$ to compare their behavior on all words in $\Sigma^\ast$. 
To this end, we introduce auxiliary variables, $y^{\dfa}_{q,q'}$ where $q,q'\in\upto{\bound}$.
Precisely, $y^{\dfa}_{q,q'}$ is set to true, if there exists a word $w\in\Sigma^\ast$ such that there are runs $\dfa\colon q_I\xrightarrow{w}q$ and $\dfa'\colon{q'_I}\xrightarrow{w} q'$.

To ensure $\lang{\dfa'}\subseteq\lang{\dfa}$, $\Phi_{\subseteq\dfa}$ imposes the following constraints:
\begin{align}
y^{\dfa}_{1,1}\label{eq:sync-init}\\
\bigwedge_{q=\trans{p}{a}} \bigwedge_{p',q'\in\upto{\bound}}\bigwedge_{a\in\Sigma} \Big[\big[y^{\mathcal{A}}_{p,p'} \wedge d_{p',a,q'}\big] \rightarrow y^{\mathcal{A}}_{q,q'}\Big]\label{eq:sync-transition}\\
\bigwedge_{p\not\in F}\bigwedge_{p'\in\upto{\bound}}  \Big[y^{\mathcal{A}}_{p,p'} \rightarrow \neg f_{p'}\Big]\label{eq:sync-final}
\end{align}
The first constraint ensures that the synchronized runs of $\dfa$ and $\dfa'$ start in the respective initial states, while the second constraint ensures that they adhere to their respective transition functions.
The third constraint ensures that if the synchronized run ends in a non-final state in $\dfa$, it must also end in a non-final state in $\dfa'$, hence forcing $L(\dfa')\subseteq L(\dfa)$.

For constructing the final conjunct $\Phi_{\not\supset\dfa}$, the variables we exploit rely the following result:
\begin{lemma}\label{lem:time-horizon}
	Let $\dfa$, $\dfa'$ be DFAs such that $\abs{\dfa}=\abs{\dfa'}=\bound$ and $\lang{\dfa'}\subset\lang{\dfa}$, and let $\horizon=\bound^2$. Then there exists a word $w\in\Sigma^\ast$ such that $|w|\leq\horizon$ and $w\in\lang{\dfa}\setminus\lang{\dfa'}$. 
\end{lemma}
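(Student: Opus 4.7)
The plan is to use a standard product automaton construction. I would form the product $\mathcal{A} \times \mathcal{A}'$ whose states are pairs $(p,p') \in Q \times Q'$, whose initial state is $(q_I, q'_I)$, and whose transition function is the componentwise application of $\delta$ and $\delta'$. By induction on word length, after reading any $w \in \Sigma^\ast$ the product reaches exactly the pair of states $\mathcal{A}$ and $\mathcal{A}'$ reach individually on $w$. The state space of the product has size at most $n \cdot n = n^2 = K$.

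Next I would single out the set $T = \{(p,p') \mid p \in F,\ p' \notin F'\}$ of product states that correspond to acceptance by $\mathcal{A}$ and rejection by $\mathcal{A}'$. A word $w$ lies in $L(\mathcal{A}) \setminus L(\mathcal{A}')$ if and only if the run of the product on $w$ ends in $T$. Since $L(\mathcal{A}') \subset L(\mathcal{A})$ is a \emph{proper} inclusion, at least one word witnesses membership in $L(\mathcal{A}) \setminus L(\mathcal{A}')$, so some state in $T$ is reachable from $(q_I, q'_I)$ in the product.

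Now I would invoke the elementary graph-theoretic fact that in any directed graph on $N$ vertices, every reachable vertex is reachable by a simple path of length at most $N-1$. Applied to the product automaton with $N \leq n^2$ vertices, this yields a word $w$ of length at most $n^2 - 1 \leq K$ whose run in the product ends in $T$. Such a $w$ belongs to $L(\mathcal{A}) \setminus L(\mathcal{A}')$, which is exactly what the lemma asserts.

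There is no real obstacle here; the only thing to be slightly careful about is the bookkeeping of the bound: the product has at most $n^2$ states, hence a shortest path to any reachable state uses at most $n^2 - 1$ transitions, which is comfortably within the stated bound $K = n^2$. If one wanted a tighter bound one could refine to $n^2 - 1$, but $n^2$ already suffices for the algorithmic use in the next section.
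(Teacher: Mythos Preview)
Your argument is correct and is the standard product-automaton proof of this fact. The paper does not actually supply a proof of this lemma; it states the result and immediately uses it to justify the bound $\horizon=\bound^2$ on the indices of the variables $z_{i,q,q'}$. Your construction of $\dfa\times\dfa'$, identification of the target set $T=\{(p,p')\mid p\in F,\ p'\notin F'\}$, and the shortest-path bound of $\bound^2-1$ in a graph on at most $\bound^2$ vertices together give exactly the claimed conclusion, so nothing is missing.
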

This result provides an upper bound to the length of a word that can distinguish between DFAs $\dfa$ and $\dfa'$ 

Based on this result, we introduce variables $z_{i,q,q'}$ where $i\in\upto{\bound^2}$ and $q,q'\in\upto{\bound}$ to track the
synchronized run of $\dfa$ and $\dfa'$ on a word of length at most $\horizon=n^2$.
Precisely, if $z_{i,q,q'}$ is set to true, then there exists a word $w$ of length $i$, with the runs $\dfa\colon q_I\xrightarrow{w}q$ and $\dfa':q_{I}'\xrightarrow{w} q'$. 

Now, $\Phi_{\not\supseteq\dfa}$ imposes the following constraints:
\begin{align}
z_{0,1,1}\label{eq:one-word-init}\\
\bigwedge_{i\in\upto{\bound^2}}\Big[\bigvee_{q,q'\in\upto{\bound}} z_{i,q,q'} \wedge \big[\bigwedge_{\substack{p\neq q\in\upto{\bound}\\p'\neq q' \in\upto{\bound}}}\neg z_{i,p,p'} \vee \neg  z_{i,q,q'}\big]\Big]\label{eq:one-word-unique}\\
\bigwedge_{\substack{p,q\in\upto{\bound}\\p',q'\in\upto{\bound}}} \Big[ \big[z_{i,p,p'} \wedge z_{i+1,q,q'}\big] \limplies \bigvee_{\substack{a\in\Sigma \text{ where}\\{q=\delta(p,a)}}} d_{p',a,q'} \Big]\label{eq:one-word-transition}\\
\bigvee_{i\in\upto{\bound^2}}\bigvee_{\substack{q\in\final\\q'\in\upto{\bound}}} \Big[z_{i,q,q'} \wedge \neg f_{q'}\Big]\label{eq:one-word-final}
\end{align}
The first three constraints above ensure that words up to length $\bound^2$ have a valid synchronised run on the two DFAs $\dfa$ and $\dfa'$.
The final constraint ensures that there is a word of length $\leq\bound^2$ on which the synchronized run ends in a final state in $\dfa$ but not in $\dfa'$, ensuring $L(\dfa)\not\subseteq L(\dfa')$.

We now prove the correctness of the propositional formula $\dfaencoding$ that we construct using the above constraints:
\begin{theorem}\label{thm:dfa-encoding-correctness}
	Let $\dfaencoding$ be as defined above. Then, we have the following:
	\begin{enumerate}
		\item If $\dfaencoding$ is satisfiable, then there exists a DFA $\dfa'$ that is an $\nrep$ and $\lang{\dfa'}\subset \lang{\dfa}$.
		\item If there exists a DFA $\dfa'$ that is an $\nrep$ and $\lang{\dfa'}\subset \lang{\dfa}$, then $\dfaencoding$ is satisfiable. 
	\end{enumerate}
\end{theorem}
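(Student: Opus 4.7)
The plan is to prove the biconditional by separately verifying each of the four conjuncts of $\dfaencoding$, with the heart of the argument being three inductions tying the auxiliary variables $x$, $y$, $z$ to concrete runs.

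For direction (1), given a model $v$, I define $\dfa' = ([\bound], \Sigma, \delta', 1, F')$ with $\delta'(p,a) = q$ iff $v(d_{p,a,q}) = 1$ and $F' = \{q : v(f_q) = 1\}$. Constraint \eqref{eq:deter} from $\Phi_{\mathtt{DFA}}$ makes $\delta'$ a well-defined total function, so $\dfa'$ is a DFA with $|\dfa'| \leq \bound$. By induction on $|u|$, using \eqref{eq:init_state} at the base and \eqref{eq:transtion} at the step, I show $v(x_{u,q}) = 1$ whenever $\dfa' \colon 1 \xrightarrow{u} q$ for $u \in \pref{\positives}$; combined with \eqref{eq:final-state}, this forces $\positives \subseteq \lang{\dfa'}$. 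An analogous induction on $|w|$ starting from \eqref{eq:sync-init} and using \eqref{eq:sync-transition} shows $v(y^{\dfa}_{q,q'}) = 1$ whenever some word synchronizes $\dfa$ to $q$ and $\dfa'$ to $q'$, so the contrapositive of \eqref{eq:sync-final} yields $\lang{\dfa'} \subseteq \lang{\dfa}$. For strict inclusion, \eqref{eq:one-word-init} and \eqref{eq:one-word-unique} pin down, for every $i \leq \bound^2$, a unique state pair $(q_i, q'_i)$ with $v(z_{i,q_i,q'_i}) = 1$; \eqref{eq:one-word-transition} then lets me extract at each step a symbol $a_{i+1}$ with $q_{i+1} = \delta(q_i, a_{i+1})$ and $\delta'(q'_i, a_{i+1}) = q'_{i+1}$. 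The resulting $w = a_1 \cdots a_{\bound^2}$ drives both DFAs in lockstep, and \eqref{eq:one-word-final} produces an index $i^\ast$ with $q_{i^\ast} \in F$ and $q'_{i^\ast} \notin F'$, so the prefix of $w$ of length $i^\ast$ lies in $\lang{\dfa} \setminus \lang{\dfa'}$.

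For direction (2), given $\dfa'$ that is an $\nrep$ with $\lang{\dfa'} \subsetneq \lang{\dfa}$, I relabel its states as $[\,|\dfa'|\,]$ with initial state $1$ and pad with extra non-final unreachable states so that $Q' = [\bound]$ without changing $\lang{\dfa'}$. I then define $v$ by reading off $\dfa'$: $d_{p,a,q}$, $f_q$, and $x_{u,q}$ reflect the transitions, finals, and runs of $\dfa'$; $y^{\dfa}_{q,q'} = 1$ exactly when some word drives $\dfa$ and $\dfa'$ to $(q,q')$ in lockstep. The first three conjuncts then hold by direct verification, using $\positives \subseteq \lang{\dfa'}$ and $\lang{\dfa'} \subseteq \lang{\dfa}$. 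For the $z$ variables, I invoke Lemma~\ref{lem:time-horizon} to obtain a distinguisher $w \in \lang{\dfa} \setminus \lang{\dfa'}$ with $|w| \leq \horizon = \bound^2$; after extending $w$ to length exactly $\bound^2$ by appending arbitrary symbols, I set $z_{i,q,q'} = 1$ iff the length-$i$ prefix of the extended word drives $(\dfa, \dfa')$ to $(q, q')$. Constraints \eqref{eq:one-word-init}--\eqref{eq:one-word-transition} hold by construction, and \eqref{eq:one-word-final} is satisfied at $i = |w|$.

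The main obstacle is the $\Phi_{\not\supseteq\dfa}$ conjunct, since the $z$ variables encode a witness word only implicitly, as a chain of $\bound^2$ state pairs rather than an explicit symbol sequence. For soundness, one must verify that \eqref{eq:one-word-init}--\eqref{eq:one-word-transition} genuinely force the existence of an actual word realizing the chain in both DFAs; the key is the disjunction on the right-hand side of \eqref{eq:one-word-transition}, which, together with the per-step uniqueness from \eqref{eq:one-word-unique}, lets one pick a single consistent symbol $a_{i+1}$ at each index. For completeness, Lemma~\ref{lem:time-horizon} is essential: it is precisely what guarantees that the fixed range $i \in [\bound^2]$ of the $z$ variables is wide enough to accommodate some distinguisher whenever $\lang{\dfa'} \subsetneq \lang{\dfa}$, and the padding step is then a harmless extension that leaves the asymmetry at position $|w|$ intact.
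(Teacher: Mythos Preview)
Your proposal is correct and follows essentially the same route as the paper: the same three inductions on the auxiliary variables $x$, $y$, $z$ (the paper packages them as Claims~\ref{cl:positives}--\ref{cl:proper-inclusion}) and the same read-off assignment in the reverse direction. If anything, you are slightly more careful than the paper in two places---padding $\dfa'$ with dummy states so that $Q'=[\bound]$ exactly, and extending the distinguisher to full length $\bound^2$ so that the per-index uniqueness constraint~\eqref{eq:one-word-unique} is satisfied at every $i$---both of which the paper glosses over.
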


To prove the above theorem, we propose intermediate claims, all of which we prove first.
For the proofs, we assume $\dfa=(\states,\Sigma,\delta,{q_I},\final)$ to be the current hypothesis, $v$ to be a model of $\dfaencoding$, and $\dfa'=(\states',\Sigma,\delta',{q'_I},\final')$ to be the DFA constructed from the model $v$ of $\dfaencoding$.
\setlist[description]{font=\normalfont\itshape\space}
\begin{numclaim}\label{cl:positives}
	For all $u\in\pref{\positives}$, $\dfa':{q'_I}\xrightarrow{u}q$ implies $v(x_{u,q})=1$.
\end{numclaim}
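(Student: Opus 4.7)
The plan is to prove Claim~\ref{cl:positives} by straightforward induction on the length of $u\in\pref{\positives}$, using the encoding constraints~\eqref{eq:init_state} and~\eqref{eq:transtion} together with the definition of $\dfa'$ extracted from the model $v$.

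For the base case $u=\varepsilon$, the only run of $\dfa'$ on $\varepsilon$ is the empty run ending in the initial state ${q'_I}$, which under our convention is identified with the numeral $1$. So if $\dfa'\colon{q'_I}\xrightarrow{\varepsilon}q$, then $q=1$, and by the first constraint~\eqref{eq:init_state} of $\Phi_{\positives}$ we have $v(x_{\varepsilon,1})=1$.

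For the inductive step, let $u=u'a$ with $u'a\in\pref{\positives}$ (so also $u'\in\pref{\positives}$), and suppose $\dfa'\colon{q'_I}\xrightarrow{u'a}q$. Then there is a unique state $p\in\upto{\bound}$ such that $\dfa'\colon{q'_I}\xrightarrow{u'}p$ and $\delta'(p,a)=q$. By the way $\dfa'$ is constructed from $v$ (together with the determinism constraint~\eqref{eq:deter} in $\Phi_{\mathtt{DFA}}$ that guarantees exactly one such $q$), the equality $\delta'(p,a)=q$ gives $v(d_{p,a,q})=1$. The inductive hypothesis applied to $u'$ yields $v(x_{u',p})=1$. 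Instantiating the implication~\eqref{eq:transtion} at $u'a$, $p$, $q$ and evaluating under $v$ gives $v(x_{u'a,q})=1$, as desired.

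The only subtlety, and what one might call the main obstacle, is ensuring that the transition $\delta'(p,a)=q$ is well-defined from $v$: this requires the determinism clause in $\Phi_{\mathtt{DFA}}$ to guarantee a unique $q$ with $v(d_{p,a,q})=1$ for each pair $(p,a)$. Once this is in place, the induction is purely syntactic, following the shape of constraints~\eqref{eq:init_state} and~\eqref{eq:transtion}, and no further insight is required.
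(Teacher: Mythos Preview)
Your proof is correct and follows essentially the same approach as the paper's own proof: induction on the length of $u$, with the base case handled by Constraint~\eqref{eq:init_state} and the inductive step by Constraint~\eqref{eq:transtion} combined with the construction of $\dfa'$ from $v$. Your explicit remark that $u'\in\pref{\positives}$ and your appeal to the determinism constraint~\eqref{eq:deter} for well-definedness of $\delta'$ are minor clarifications absent from the paper, but the argument is otherwise identical.
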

\begin{proof}
	We prove the claim using induction on the length $\abs{u}$ of the word $u$.
	\begin{description}
		\item[Base case:] Let $u=\varepsilon$. Based on the definition of runs, $\dfa':{q'_I}\xrightarrow{\epsilon}q$ implies $q={q'_I}$.
		Also, using Constraint~\ref{eq:init_state}, we have $v(x_{\varepsilon,q})=1$ if and only if $q={q'_I}$ (note $q'_I$ is indicated using numeral 1).
		Combining these two facts proves the claim for the base case.
		
		\item[Induction step:] As induction hypothesis, let $\dfa':{q_I}'\xrightarrow{u}q$ implies $v(x_{u,q})=1$ for all words $u\in\pref{\positives}$ of length $\leq k$.
		Now, consider the run $\dfa':{q'_I}\xrightarrow{u} q \xrightarrow{a} q'$ for some $ua\in\pref{\positives}$.
		For this run, based on the induction hypothesis and the construction of $\dfa'$, we have $v(x_{u,q})=1$ and $v(d_{p,a,q})=1$.
		Now, using Constraint~\ref{eq:transtion}, $v(x_{u,q})=1$ and $v(d_{p,a,q})=1$ implies $v(x_{ua,q})=1$, thus, proving the claim.
	\end{description}
	\qed
\end{proof}

\begin{numclaim}\label{cl:inclusion}
	For all $w\in \Sigma^\ast$, $\dfa:q_I\xrightarrow{w}q$ and $\dfa': {q'_I}\xrightarrow{w}q'$ imply $v(y^{\dfa}_{q,q'})=1$.
\end{numclaim}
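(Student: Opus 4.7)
The plan is to mirror the structure of the proof of Claim~\ref{cl:positives}, proceeding by induction on the length $\abs{w}$ of the word $w$, but now tracking a synchronized pair of runs instead of a single run on a prefix of $\positives$. Throughout, I use the convention that the initial states $q_I$ and $q'_I$ are identified with the numeral $1$.

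For the base case $w = \varepsilon$, the definition of runs forces $q = q_I$ and $q' = q'_I$, so $q$ and $q'$ are both indexed by $1$. Constraint~\ref{eq:sync-init} asserts $y^{\dfa}_{1,1}$, so $v(y^{\dfa}_{q,q'}) = 1$ as required.

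For the induction step, suppose the claim holds for every word of length at most $k$, and consider a word $w = ua$ of length $k+1$ with $a \in \Sigma$. Then the synchronized runs decompose as $\dfa\colon q_I \xrightarrow{u} p \xrightarrow{a} q$ and $\dfa'\colon q'_I \xrightarrow{u} p' \xrightarrow{a} q'$, where in particular $q = \trans{p}{a}$. By the induction hypothesis applied to $u$, we obtain $v(y^{\dfa}_{p,p'}) = 1$. Since $\dfa'$ is the DFA constructed from the model $v$, the transition $\delta'(p',a) = q'$ translates to $v(d_{p',a,q'}) = 1$. Instantiating Constraint~\ref{eq:sync-transition} at the indices $p,p',q,q',a$ (which is legal precisely because $q = \trans{p}{a}$), we get the implication $[y^{\dfa}_{p,p'} \wedge d_{p',a,q'}] \rightarrow y^{\dfa}_{q,q'}$, whose premise is satisfied by $v$. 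Hence $v(y^{\dfa}_{q,q'}) = 1$, completing the induction.

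I do not anticipate a real obstacle here: the claim is a straightforward simulation argument of the kind already carried out in Claim~\ref{cl:positives}, and both the base case and the inductive step are forced by exactly one of the constraints defining $\Phi_{\subseteq \dfa}$. The only point requiring a small amount of care is the bookkeeping in the inductive step, specifically, checking that the side condition $q = \trans{p}{a}$ under the big conjunction in Constraint~\ref{eq:sync-transition} is precisely the information supplied by the run of $\dfa$ on $ua$, so that the appropriate instance of the implication is genuinely asserted by $\Phi_{\subseteq \dfa}$.
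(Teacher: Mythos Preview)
Your proposal is correct and follows essentially the same approach as the paper: induction on $\abs{w}$, with the base case handled by Constraint~\ref{eq:sync-init} and the inductive step by Constraint~\ref{eq:sync-transition} together with the fact that $\dfa'$ is built from $v$. Your explicit remark that the side condition $q=\trans{p}{a}$ in Constraint~\ref{eq:sync-transition} is exactly what the run of $\dfa$ provides is a welcome clarification, but otherwise the argument matches the paper's proof step for step.
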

\begin{proof}
	We prove this using induction of the length $\abs{w}$ of the word $w$.
	\begin{description}
		\item[Base case:] Let $w=\varepsilon$. Based on the definition of runs, $\dfa:{q_I}\xrightarrow{\varepsilon}q$, $\dfa':{q'_I}\xrightarrow{\varepsilon}q'$ implies $q={q_I}$ and $q={q'_I}$.
		Also, using Constraint~\ref{eq:sync-init}, $q=q_I$ and $q'={q'_I}$ imply $v(y^{\dfa}_{q,q'})=1$ (note $q_I$ and $q'_I$ are both indicated using numeral 1).
		Combining these two facts proves the claim for the base case.
		
		\item[Induction step:] As induction hypothesis, let $\dfa:{q_I}\xrightarrow{w}q$ and $\dfa':{q'_I}\xrightarrow{w}q'$ imply $v(y^{\dfa}_{q,q'})=1$ for all words $w\in\Sigma^\ast$ of length $\leq k$.
		Now, consider the runs $\dfa:{q_I}\xrightarrow{w} p \xrightarrow{a} q$ and  $\dfa':{q_I}'\xrightarrow{w} p' \xrightarrow{a} q'$ for some word $wa\in\Sigma^\ast$
		For these runs, based on the induction hypothesis and the construction of $\dfa'$, we have $v(y^{\dfa}_{p,p'})=1$ and $v(d_{p',a,q'})=1$.
		Now, using Constraint~\ref{eq:sync-transition}, we can say that $v(y^{\dfa}_{p,p'})=1$ and $v(d_{p',a,q'})=1$ imply $v(y^{\dfa}_{q,q'})=1$ (where $q=\delta(p,a)$), thus, proving the claim.
	\end{description}
\qed	
\end{proof}

\begin{numclaim}\label{cl:proper-inclusion}
	$v(z_{i,q,q'})=1$ implies there exists $w\in \Sigma^i$ with runs $\dfa:q_I\xrightarrow{w}q$ and $\dfa': {q'_I}\xrightarrow{w}q'$.
\end{numclaim}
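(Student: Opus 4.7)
The plan is to prove the claim by induction on $i$, in direct analogy with Claims~\ref{cl:positives} and~\ref{cl:inclusion}, but using the variables $z_{i,q,q'}$ that symbolically track a single distinguishing word of length up to $\bound^2$.

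For the base case $i=0$, I would observe that Constraint~\ref{eq:one-word-init} asserts $z_{0,1,1}$, and Constraint~\ref{eq:one-word-unique} restricted to $i=0$ forces at most one triple at that index, so $v(z_{0,q,q'})=1$ implies $q=q_I$ and $q'=q'_I$ (both indicated by the numeral $1$). The empty word $\varepsilon\in\Sigma^0$ then trivially witnesses the runs $\dfa\colon q_I\xrightarrow{\varepsilon}q_I$ and $\dfa'\colon q'_I\xrightarrow{\varepsilon}q'_I$.

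For the inductive step, I would assume the claim holds at index $i$ and suppose $v(z_{i+1,q,q'})=1$. First, I apply Constraint~\ref{eq:one-word-unique} at index $i$ to extract the unique pair $(p,p')\in\upto{\bound}\times\upto{\bound}$ with $v(z_{i,p,p'})=1$; the induction hypothesis then supplies a word $w'\in\Sigma^i$ with runs $\dfa\colon q_I\xrightarrow{w'}p$ and $\dfa'\colon q'_I\xrightarrow{w'}p'$. Next, combining $v(z_{i,p,p'})=1$ with $v(z_{i+1,q,q'})=1$ in Constraint~\ref{eq:one-word-transition} yields some $a\in\Sigma$ satisfying both $q=\delta(p,a)$ and $v(d_{p',a,q'})=1$; by how $\dfa'$ is read off from $v$, the latter gives $\delta'(p',a)=q'$. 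Extending both runs by $a$, the word $w=w'a\in\Sigma^{i+1}$ is the required witness.

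I expect the only delicate point to be the order in which the constraints are invoked: the uniqueness clause~\ref{eq:one-word-unique} must be used before the transition clause~\ref{eq:one-word-transition}, since otherwise Constraint~\ref{eq:one-word-transition} alone could fire at a predecessor pair $(p,p')$ disconnected from the one supplied by the induction hypothesis, and the runs would fail to concatenate. Once the unique predecessor pair is pinned down, the remainder is routine bookkeeping on the constraint definitions.
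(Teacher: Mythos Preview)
Your proposal is correct and follows essentially the same induction on $i$ as the paper's proof: the base case uses Constraints~\ref{eq:one-word-init} and~\ref{eq:one-word-unique} to force $q=q_I$, $q'=q'_I$, and the inductive step combines the unique predecessor pair from Constraint~\ref{eq:one-word-unique} with Constraint~\ref{eq:one-word-transition} to obtain the extending symbol $a$. Your explicit remark about invoking uniqueness before the transition clause is a helpful clarification the paper leaves implicit, but the underlying argument is the same.
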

\begin{proof}
	We prove this using  induction on the parameter $i$.
	\begin{description}
		\item[Base case:] Let $i=0$. Based on the Constraints~\ref{eq:one-word-init} and~\ref{eq:one-word-unique}, $v(z_{0,q,q'})=1$ implies $q={q_I}$ and $q={q_I}'$.
		Now, there always exists a word of length $0$, $w=\varepsilon$, for which $\dfa:q_I\xrightarrow{\varepsilon}q$ and $\dfa': {q'_I}\xrightarrow{\varepsilon}q'$ proving the claim for the base case.
		
		\item[Induction step:] As induction hypothesis, let $v(z_{k,p,p'})=1$ and thus, $w$ be a word of length $k$ such that $\dfa:q_I\xrightarrow{w}p$ and $\dfa': {q'_I}\xrightarrow{w}p'$.
		Now, assume $v(z_{k+1,q,q'})=1$.
		Based on Constraint~\ref{eq:one-word-transition}, for some $a\in\Sigma$ such that $q=\delta(p,a)$, $v(d_{p',a,q'})=1$.
		Thus, on the word $wa$, there are runs $\dfa:q_I\xrightarrow{w}p\xrightarrow{a}q$ and $\dfa': {q'_I}\xrightarrow{w}p'\xrightarrow{a}q'$, proving the claim.
	\end{description}
\qed
\end{proof}

We are now ready to prove Theorem~\ref{thm:dfa-encoding-correctness}, i.e., the correctness of $\dfaencoding$.

\begin{proof}[of Theorem~\ref{thm:dfa-encoding-correctness}]
	For the forward direction, consider that $\dfaencoding$ is satisfiable with a model $v$ and $\dfa'$ is the DFA constructed using the model $v$.
	First, using Claim~\ref{cl:positives}, $\dfa'\colon {q'_I}\xrightarrow{w} q$ implies $v(x_{w,q})=1$ for all $w\in\positives$.
	Now, based on Constraint~\ref{eq:final-state}, $v(x_{w,q})=1$ implies $v(f_q)=1$ for all $w\in\positives$.
	As a result, for each $w\in\positives$, its run $\dfa'\colon {q'_I}\xrightarrow{w} q$ must end in a final state $q\in\final'$ in $\dfa'$.
	Thus, $\dfa'$ accepts all positive words and hence, is an $\nrep$.
	Second, using Claim~\ref{cl:inclusion}, $\dfa:q_I\xrightarrow{w}q$ and $\dfa': {q'_I}\xrightarrow{w}q'$ imply $v(y^{\dfa}_{q,q'})=1$ for all words $w\in\Sigma^\ast$.
	Thus, based on Constraint~\ref{eq:sync-final}, if $q\not\in\final$, then $q'\not\in\final$, implying $\lang{\dfa'}\subseteq\lang{\dfa}$.
	Third, using Claim~\ref{cl:proper-inclusion}, $v(z_{i,q,q'})=1$ implies that there exists $w\in \Sigma^i$ with runs $\dfa:q_I\xrightarrow{w}q$ and $\dfa': {q'_I}\xrightarrow{w}q'$.
	Now, based on Constraint~\ref{eq:one-word-final}, there exists some $i\leq \bound^2$, $q\in F$ and $q'$ in $\dfa'$ such that $v(z_{i,q,q'})=1$ and $v(f_{q'})=0$.
	Combining this fact with Claim~\ref{cl:proper-inclusion}, we deduce that there exists $w\in\Sigma^\ast$ with length $\leq\bound^2$ with run $\dfa:q_I\xrightarrow{w}q$ ending in final state $q\in\final$ and run $\dfa': {q'_I}\xrightarrow{w}q'$ not ending in a final state $q'\in\final'$.
	This shows that $\lang{\dfa}\neq\lang{\dfa'}$.
	We thus conclude that $\dfa$ is an $\nrep$ and $\lang{\dfa'}\subset\lang{\dfa}$.
	
	For the other direction, based on a suitable DFA $\dfa'$, we construct an assignment $v$ for all the introduce propositional variables. 
	First, we set $v(d_{p,a,q})=1$ if $\delta'(p,a)=q$ and $v(f_q)=1$ if $q\in\final'$.
	Since $\delta'$ is a deterministic function, $v$ satisfies the Constraint~\ref{eq:deter}.
	Similarly, we set $v(x_{u,q})=1$ if $\dfa'\colon {q'_{I}}\xrightarrow{u}q$ for some $u\in\pref{\positives}$.
	It is a simple exercise to check that $v$ satisfies Constraints~\ref{eq:init_state} to~\ref{eq:final-state}.
	Next, we set $v(y^{\dfa}_{q,q'})=1$ if there are runs $\dfa\colon q_I\xrightarrow{w}q$ and $\dfa'\colon {q'_I}\xrightarrow{w}q'$ on some word $w\in\Sigma^\ast$.
	Algorithmically, we set $v(y^\dfa_{q,q'})=1$ if states $q$ and $q'$ are reached in the synchronized run of $\dfa$ and $\dfa'$ on some word (which is typically computed using a breadth-first search on the product DFA). 
	It is easy to see that such an assignment $v$ satisfies Constraints~\ref{eq:sync-init} to~\ref{eq:sync-final}.
	Finally, we set assignment to $z_{i,q,q'}$ exploiting a word $w$ which permits runs $\dfa\colon q_I\xrightarrow{w}q$ and $\dfa'\colon {q'_I}\xrightarrow{w}q'$, where $q$ is in $\final$ but $q'$ not in $\final'$.  
	In particular, we set $v(z_{i,q,q'})=1$ for $i=\abs{u}$ and $\dfa\colon q_I\xrightarrow{u}q$ and $\dfa'\colon {q'_I}\xrightarrow{u}q'$ for all prefixes $u$ of $w$.
	Such an assignment encodes a synchronized run of the DFAs $\dfa$ and $\dfa'$ on word $w$ that ends in a final state in $\dfa$, but not in $\dfa'$. 
	Thus, $v$ satisfies Constraints~\ref{eq:one-word-init} to~\ref{eq:one-word-final}.
	\qed
\end{proof}

\begin{proof}[of Theorem~\ref{thm:dfa-algo-correctness}]
First, observe that Algorithm~\ref{alg:symbolic-dfa} always terminates. 
This is because, there is finitely many $\nrep$ for a given $\bound$ (following its definition) and in each iteration, the algorithm finds a new one.
Next, observe that the algorithm terminates when $\dfaencoding$ is unsatisfiable.
Now, based on the properties of $\dfaencoding$ established in Theorem~\ref{thm:dfa-encoding-correctness}, if $\dfaencoding$ is unsatisfiable for some DFA $\dfa$, then there are no $\nrep$ DFA $\dfa'$ for which $L(\dfa')\subset L(\dfa)$, thus, proving the correctness of the algorithm.
\qed
\end{proof}

\section{Learning LTL\textsubscript{f} formulas from Positive Examples}
\label{sec:ltl-learner}

We now switch our focus to algorithms for learning LTL\textsubscript{f} formulas from positive examples. 
We begin with a formal introduction to LTL\textsubscript{f}.

\emph{Linear temporal logic} (over finite traces) (LTL\textsubscript{f}) is a logic that reasons about the temporal behavior of systems using temporal modalities.
While originally LTL\textsubscript{f} is built over propositional variables $\prop$,
to unify the notation with DFAs, we define LTL\textsubscript{f} over an alphabet $\Sigma$.
It is, however, not a restriction since an LTL\textsubscript{f} formula over $\prop$ can always be translated to an LTL\textsubscript{f} formula over $\Sigma=2^\prop$. 
Formally, we define LTL\textsubscript{f} formulas---denoted by Greek small letters---inductively as:
\[ \varphi \coloneqq a\in\Sigma \mid \lnot\varphi \mid \varphi \lor \varphi \mid \lX\varphi \mid \varphi\lU\varphi \]
As syntactic sugar, along with additional constants and operators used in propositional logic, we allow the standard temporal operators $\lF$ (``finally'') and $\lG$ (``globally'').
We define $\operators = \{ \lnot, \lor, \land, \limplies, \lX, \lU, \lF, \lG \}~\cup~\Sigma$ to be the set of all operators (which, for simplicity, also includes symbols).
We define the size $\abs{\varphi}$ of $\varphi$ as the number of its unique subformulas; e.g., size of $\varphi=(a \lU \lX b) \lor \lX b$ is five, since its five distinct subformulas are 
$a, b, \lX b, a \lU \lX b$, and $(a \lU \lX b) \lor \lX b$.

To interpret LTL\textsubscript{f} formulas over (finite) words, we follow the semantics proposed by~\citet{GiacomoV13}.
Given a word $w$, we define recursively when a LTL\textsubscript{f} formula holds at position $i$, i.e., $w,i\models \varphi$, as follows: 
\begin{align*}
&w,i \models a \in \Sigma \text{ if and only if }a = w[i]\\
&w,i \models \lnot \varphi \text{ if and only if } w,i\not\models \varphi \\
&w,i \models \lX \varphi \text{ if and only if } i < |w| \text{ and } w, i+1 \models \varphi \\
&w,i \models \varphi \lU \psi \text{ if and only if } w,j \models \psi \text{ for some }\nonumber \\
&\hspace{10mm}i \leq j \leq |w| \text{ and } w,i' \models \varphi\text{ for all }i \leq i' < j
\end{align*}
We say $w$ \emph{satisfies} $\varphi$ or, alternatively, $\varphi$ \emph{holds on} $w$ if $w,0\models\varphi$, which, in short, is written as $w\models \varphi$.


The OCC problem for LTL\textsubscript{f} formulas, similar to Problem~\ref{prob:occ-dfa}, relies upon a set of positive words $\positives\subset\Sigma^\ast$ and a size upper bound $\bound$.
Moreover, an LTL\textsubscript{f} formula $\varphi$ is an $\nrep$ of $\positives$ if, for all $w\in\positives$, $w\models\varphi$, and $\abs{\varphi}\leq\bound$.
Again, we omit $\positives$ from $\nrep$ when clear.
Also, in this section, an $\nrep$ refers only to an LTL formula.

We state the OCC problem for LTL\textsubscript{f} formulas as follows: 
\begin{problem}[OCC problem for LTL\textsubscript{f} formulas]\label{prob:occ-ltl}
	Given a set of positive words $\positives$ and a size bound $\bound$, learn an LTL\textsubscript{f} formula $\varphi$ such that:
	\renewcommand\labelenumi{(\theenumi)}
	\begin{enumerate*}
		\item $\varphi$ is an $\nrep$; and
		\item for every LTL\textsubscript{f} formula $\varphi'$ that is an $\nrep$, $\varphi'\not\rightarrow\varphi$ or $\varphi\rightarrow\varphi'$.
	\end{enumerate*}
\end{problem}
Intuitively, the above problem searches for an LTL\textsubscript{f} formula $\varphi$ that is an $\nrep$ and holds on a minimal set of words.
Once again, like Problem~\ref{prob:occ-dfa}, there can be several such LTL\textsubscript{f} formulas, but we are interested in learning exactly one.

\subsection{The Semi-Symbolic Algorithm}\label{sec:semi-sym-ltl}
Our \emph{semi-}symbolic, does not only depend on the current hypothesis, an LTL\textsubscript{f} formula $\varphi$ here, as was the case in Algorithm~\ref{alg:symbolic-dfa}.
In addition, it relies on a set of negative examples $\negatives$, accumulated during the algorithm.
Thus, using both the current hypothesis $\varphi$ and the negative examples $\negatives$, we construct a propositional formula $\ltlencoding$ to search for the next hypothesis $\varphi'$.
Concretely, $\ltlencoding$ has the properties that:
\begin{enumerate*}
	\item $\ltlencoding$ is satisfiable if and only if there exists an LTL\textsubscript{f} formula $\varphi'$ that is an $\nrep$, does not hold on any $w\in N$, and $\varphi\not\rightarrow\varphi'$; and
	\item based on a model $\model$ of $\ltlencoding$, one can construct such an LTL\textsubscript{f} formula $\varphi'$.
\end{enumerate*}

\begin{algorithm}[tb]
	\caption{Semi-symbolic Algorithm for learning LTL\textsubscript{f} formula }\label{alg:hybrid-ltl}
	\textbf{Input}: Positive words~$P$, bound~$\bound$
	\begin{algorithmic}[1]
		\STATE $N\gets\emptyset$ 
		\STATE $\varphi\gets\mathit{true}$, $\ltlencoding \coloneqq \Psi_{\mathtt{LTL}}\wedge\Psi_{\positives}$
		\WHILE{$\ltlencoding$ is satisfiable (with model $\model$)}
		\STATE $\varphi' \gets$ LTL formula constructed from $v$
		\IF{$\varphi'\rightarrow\varphi$}
		\STATE Update $\varphi$ to $\varphi'$\label{line:subset}
		\ELSE
		\STATE Add $w$ to $\negatives$, where $w\models\neg\varphi'\wedge\varphi$\label{line:incomparable}
		\ENDIF
		\STATE $\ltlencoding \coloneqq \Psi_{\mathtt{LTL}}\wedge \Psi_{\positives}\wedge\Psi_{\negatives}\wedge\Psi_{\not\leftarrow \varphi}$
		\ENDWHILE
		\RETURN $\varphi$
	\end{algorithmic}
\end{algorithm}
The semi-symbolic algorithm whose pseudocode is presented in Algorithm~\ref{alg:hybrid-ltl} follows a paradigm similar to the one illustrated in Algorithm~\ref{alg:symbolic-dfa}.
However, unlike the previous algorithm, the current guess $\varphi'$, obtained from a model of $\ltlencoding$, may not always satisfy the relation $\varphi'\rightarrow\varphi$.
In such a case, we generate a word (i.e., a negative example) that satisfies $\varphi$, but not $\varphi'$ to eliminate $\varphi'$ from the search space.
For the generation of words, we rely on constructing DFAs from the LTL\textsubscript{f} formulas~\citep{ZhuTLPV17} and then performing a breadth-first search over them.
If, otherwise, $\varphi'\rightarrow\varphi$, we then update our current hypothesis and continue until $\ltlencoding$ is unsatisfiable.
Overall, this algorithm learns an appropriate LTL\textsubscript{f} formula with the following guarantee:
\begin{theorem}\label{thm:semi-symbolic-ltl-correctness}
	Given positive words $\positives$ and size bound $\bound$, Algorithm~\ref{alg:hybrid-ltl} learns an LTL\textsubscript{f} formula $\varphi$ that is an $\nrep$ and for every LTL\textsubscript{f} formulas $\varphi'$ that is an $\nrep$, $\varphi'\not\rightarrow\varphi$ or $\varphi\rightarrow\varphi'$.
\end{theorem}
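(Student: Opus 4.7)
The plan is to follow the template of the proof of Theorem~\ref{thm:dfa-algo-correctness}, splitting the argument into a loop invariant (that the hypothesis $\varphi$ is always an $\nrep$ of $\positives$), a termination argument for the main while-loop, and a contrapositive minimality argument triggered by unsatisfiability of the final $\ltlencoding$.

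The invariant is an easy induction. Initially $\varphi=\ltrue$ is an $\nrep$ since $|\ltrue|=1\leq\bound$ and every positive word satisfies $\ltrue$; the hypothesis is modified only on Line~\ref{line:subset}, where $\varphi$ is replaced by the formula $\varphi'$ extracted from a model of $\ltlencoding$, and by the stated characterization of $\ltlencoding$ any such $\varphi'$ is itself an $\nrep$. Case (b) leaves $\varphi$ untouched, so the invariant persists.

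For termination, every case (a) update strictly shrinks $L(\varphi)$: the explicit check $\varphi'\rightarrow\varphi$ together with the encoding's guarantee $\varphi\not\rightarrow\varphi'$ yield $L(\varphi')\subsetneq L(\varphi)$; since LTL\textsubscript{f} formulas of size at most $\bound$ realize only finitely many distinct languages, case (a) triggers only finitely often. Between two successive case (a) updates, each case (b) iteration strictly enlarges $\negatives$ and strengthens $\ltlencoding$; viewing the SAT call as an incremental model-enumeration procedure so that every previously returned candidate is blocked from re-selection, the finite candidate pool is exhausted in finitely many iterations, yielding overall termination. This bookkeeping is where I expect the main technical obstacle to lie, because in contrast to $\dfaencoding$, the formula $\ltlencoding$ does not by itself force strict language inclusion and therefore does not automatically rule out the previously returned candidate.

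Finally, for the minimality conclusion I would argue contrapositively. Suppose there were an $\nrep$ $\varphi^*$ with $L(\varphi^*)\subsetneq L(\varphi)$ at termination. Then $\varphi\not\rightarrow\varphi^*$ holds, so by the characterization of $\ltlencoding$ the only way $\varphi^*$ can fail to satisfy the final $\ltlencoding$ is that it accepts some $w\in\negatives$. I would then establish, by induction on the iteration count, the invariant that $\varphi^*$ remains a valid candidate of $\ltlencoding$ throughout the run: it is an $\nrep$ by assumption; $L(\varphi_{(i)})\not\subseteq L(\varphi^*)$ holds at every iteration since $L(\varphi^*)$ is a strict subset of $L(\varphi_{(i)})$ along the chain of case (a) refinements; and its rejection of every $w\in\negatives$ follows by tracing $w$ back to its iteration of addition and invoking the observation that each case (a) update produces a hypothesis that rejects the then-current $\negatives$. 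Under the model-enumeration view of SAT, $\varphi^*$ would then be returned at some iteration and case (a) would set $\varphi:=\varphi^*$, contradicting $L(\varphi^*)\subsetneq L(\varphi)$ and completing the proof.
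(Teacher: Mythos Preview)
Your termination argument introduces an assumption absent from Algorithm~\ref{alg:hybrid-ltl}: that the SAT call is an incremental model-enumeration procedure blocking previously returned candidates. You correctly flag this as the main obstacle, but the paper resolves it without any such assumption. The paper argues that no two iterations can return the same guess $\varphi'$, via a case split on what happens between two iterations $k<l$ that allegedly produce the same guess: if all intervening iterations take the $\varphi'\rightarrow\varphi$ branch, the hypothesis has strictly shrunk, so the constraint $\Psi_{\not\leftarrow\varphi}$ at iteration $l$ already excludes the iteration-$k$ guess; if some intervening iteration takes the else branch, the negative word added there separates the two guesses via $\Psi_{\negatives}$. With only finitely many formulas of size at most $\bound$, the loop terminates---no solver-side bookkeeping is needed.

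Your minimality argument is also more roundabout than the paper's and again leans on model enumeration. The paper simply invokes the unsatisfiability of the final $\ltlencoding$ together with Theorem~\ref{thm:ltl-encoding-correctness}: for any $\nrep$ $\overline{\varphi}$, unsatisfiability forces either $\varphi\rightarrow\overline{\varphi}$ (contradicting $\varphi\not\rightarrow\overline{\varphi}$) or that $\overline{\varphi}$ accepts some $w\in\negatives$ (contradicting $\overline{\varphi}\rightarrow\varphi$). There is no need to argue that $\varphi^*$ would eventually be returned by the solver; it suffices that $\varphi^*$ would witness satisfiability of the final $\ltlencoding$, which the termination condition rules out. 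Your sketch of why $\varphi^*$ rejects every $w\in\negatives$ (that each case-(a) update yields a hypothesis rejecting the then-current $\negatives$) does not establish what is needed either: the relevant invariant concerns how the words in $\negatives$ relate to the \emph{final} hypothesis $\varphi$, and your chain of case-(a) observations does not cover words added to $\negatives$ after the last case-(a) update.
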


We now focus on the construction of $\ltlencoding$, which is significantly different from that of $\dfaencoding$. It is defined as follows:
\begin{align}\label{eq:ltl-encoding}
\ltlencoding \coloneqq \Psi_{\mathtt{LTL}}\wedge\Psi_{\positives}\wedge\Psi_{\negatives}\wedge \Psi_{\not\leftarrow\varphi}.
\end{align}
The first conjunct $\Psi_{\mathtt{LTL}}$ ensures that propositional variables we exploit encode a valid LTL\textsubscript{f} formula $\varphi'$.
The second conjunct $\Psi_{\positives}$ ensures that $\varphi'$ holds on all positive words, while the third, $\Psi_{\negatives}$, ensures that it does not hold on the negative words.
The final conjunct $\Phi_{\not\leftarrow\varphi}$ ensures that $\varphi\not\rightarrow \varphi'$.

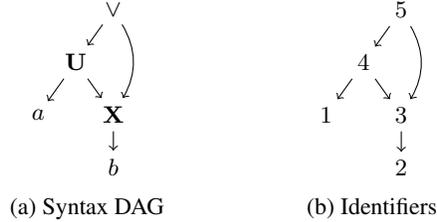
\begin{figure}
	\centering
	\subfloat[][Syntax DAG\label{fig:syntax-dag}]
	{\scalebox{1}{
			\begin{tikzpicture}
			\node (1) at (0, 0) {$\lor$};
			\node (2) at (-.5, -.7) {$\lU$};
			\node (3) at (-1.0, -1.4) {$a$};
			\node (4) at (0, -1.4) {$\lX$};
			\node (5) at (0, -2.1) {$b$};
			\node (6) at (-1.5,-.7) {};
			\node (7) at (0.8,-.7) {};
			\draw[->] (1) -- (2); 
			\draw[->] (1) to [bend left] (4);
			\draw[->] (2) -- (3);
			\draw[->] (2) -- (4);
			\draw[->] (4) -- (5);
			\end{tikzpicture}}
	}
	\hspace{1cm}
	\subfloat[][Identifiers\label{fig:identifier}]{
		\scalebox{1}{
			\begin{tikzpicture}
			\node (5) at (0, 0) {$5$};
			\node (4) at (-.5, -.7) {$4$};
			\node (3) at (0, -1.4) {$3$};
			\node (2) at (0, -2.1) {$2$};
			\node (1) at (-1.0, -1.4) {$1$};
			\node (6) at (-1.5,-.7) {};
			\node (7) at (0.8,-.7) {};
			\draw[->] (5) -- (4); 
			\draw[->] (5) to [bend left] (3);
			\draw[->] (4) -- (3);
			\draw[->] (4) -- (1);
			\draw[->] (3) -- (2);
			\end{tikzpicture}}
	}
	\caption{Syntax DAG and identifiers for $(a\lU\lX b) \vee \lX b$}
	\label{fig:representation}
\end{figure}
Following \citet{NeiderG18}, all of our conjuncts rely on a canonical syntactic representation of LTL\textsubscript{f} formulas as \emph{syntax DAGs}. A syntax DAG is a directed acyclic graph (DAG) that is obtained from the syntax tree of an LTL\textsubscript{f} formula by merging its common subformulas. 
An example of a syntax DAG is illustrated in Figure~\ref{fig:syntax-dag}.
Further, to uniquely identify each node of a syntax DAG, we assign them unique identifiers from $\upto{n}$ such that every parent node has an identifier larger than its children (see Figure~\ref{fig:identifier}).

To construct the hypothesis $\varphi'$, we encode its syntax DAG, using the following propositional variables: 
\renewcommand\labelenumi{(\theenumi)}
\begin{enumerate*}
	\item $x_{i,\lambda}$ for $i\in\upto{n}$ and $\lambda\in\Lambda$; and \item $l_{i,j}$ and $r_{i,j}$ for $i\in\upto{n}$ and $j\in\upto{i{-}1}$.
\end{enumerate*}
The variable $x_{i,\lambda}$ tracks the operator label of the Node~$i$ of the syntax DAG of $\varphi'$, while variables $l_{i,j}$ and $r_{i,j}$ encode the left and right child of Node~$i$, respectively.
Mathematically, $x_{i,\lambda}$ is set to true if and only if Node~$i$ is labeled with operator $\lambda$.
Moreover, $l_{i,j}$ (resp. $r_{i,j}$) is set to true if and only if  Node~$i$'s left (resp. right) child is Node~$j$.

To ensure variables $x_{i,\lambda}$, $l_{i,j}$ and $r_{i,j}$ have the desired meaning, $\Psi_{\mathtt{LTL}}$ imposes certain structural constraints, which we list below.
\begin{align}
\bigwedge_{i\in\upto{\bound}}\Big[\bigvee_{\lambda\in\Lambda} x_{i,\lambda} \wedge \bigwedge_{\lambda\neq \lambda'\in\Lambda} \big[\neg x_{i,\lambda} \vee \neg x_{i,\lambda'}\big]\Big]\label{eq:op-uniqueness}\\
\bigwedge_{i\in\{2,\cdots,\bound\}}\Big[\bigvee_{j\in\upto{i-1}} l_{i,j} \wedge \bigwedge_{j\neq j'\in\upto{i-1}} \big[\neg r_{i,j} \vee \neg l_{i,j'}\big]\Big]\label{eq:left-child-uniqueness}\\
\bigwedge_{i\in\{2,\cdots,\bound\}}\Big[\bigvee_{j\in\upto{i-1}} r_{i,j} \wedge \bigwedge_{j\neq j'\in\upto{i-1}} \big[\neg r_{i,j} \vee \neg r_{i,j'}\big]\Big]\label{eq:right-child-uniqueness}\\
\bigvee_{a\in\Sigma} x_{1,a}\label{eq:first-node-symbol}
\end{align}
Constraint~\ref{eq:op-uniqueness} ensures that each node of the syntax DAG of $\varphi'$ is uniquely labeled by an operator.
Constraints~\ref{eq:left-child-uniqueness} and~\ref{eq:right-child-uniqueness} encode that each node of the syntax DAG of $\varphi'$ has a unique left and right child, respectively.
Finally, Constraint~\ref{eq:first-node-symbol} encodes that the node with identifier 1 must always be a symbol.

We now describe the construction of $\Psi_{\positives}$ and $\Psi_{\negatives}$.
Both use variables $y^i_{w,t}$ where $i\in\upto{\bound}$, $w\in\positives\cup\negatives$, and $t\in\upto{\abs{w}}$.
The variable $y^i_{w,t}$ tracks whether $\varphi'_i$ holds on the suffix $w[t{:}]$, where $\varphi'_i$ is the subformula of $\varphi'$ rooted at Node~$i$.
Formally, $y^i_{w,t}$ is set to true if and only if $w[t{:}]\models\varphi'_i$.

To ensure the desired meaning of variables $y^{i}_{w,t}$, $\Psi_{\positives}$ and $\Psi_{\negatives}$ impose semantic constraints, again to ones proposed by~\citet{NeiderG18}.
\begin{align}
\bigwedge\limits_{i\in\upto{n}}\bigwedge\limits_{a\in \Sigma} x_{i,a}\rightarrow\Big[\bigwedge\limits_{t\in\upto{\abs{w}}}
\begin{cases}
y^{i}_{w,t}\text{ if }a = w[t] \\
\neg y^{i}_{w,t}\text{ if }a \neq w[t]
\end{cases}\Big]\label{eq:ltl-prop}\\
\bigwedge\limits_{\substack{i\in\upto{n}\\ j,j'\in\upto{i-1}}}x_{i,\vee}\wedge l_{i,j}\wedge r_{i,j'}\rightarrow\Big[\bigwedge\limits_{\substack{\tau\in\upto{\abs{w}}}}\Big[y^{i}_{w,t}\leftrightarrow y^{j}_{w,t}\vee y^{j'}_{w, t}\Big]\Big]\label{eq:ltl-or}\\
\bigwedge\limits_{\substack{i\in \upto{n}\\j\in\upto{i-1}}} x_{i,\lX} \land l_{i,j}\rightarrow \big[\bigwedge_{t\in\upto{\abs{w}-1}} y^{i}_{w,t} \leftrightarrow y^{j}_{w,t+1}\big] \wedge \neg y^{j}_{w,\abs{w}}\label{eq:ltl-next}\\
\bigwedge\limits_{\substack{i \in\upto{\bound} \\ j, j'\in\upto{i}}}x_{i,\lU}\wedge l_{i,j}\wedge r_{i,j'}\rightarrow\Big[\bigwedge\limits_{\substack{t\in\upto{ \abs{w}}}}\Big[y^{i}_{w,t}\leftrightarrow 
\bigvee\limits_{t\leq t'\in\upto{\abs{w}}}\Big[y^{j'}_{w,t'}\wedge  \bigwedge\limits_{t\leq \tau< t'} y^{j}_{w,\tau}\Big]\Big]\label{eq:ltl-until}
\end{align}
Intuitively, the above constraints encode the meaning of the different LTL operators using propositional logic.

To ensure that $\varphi'$ holds on positive words, we have $\Psi_{\positives}\coloneqq \bigwedge_{w\in\positives} y^{\bound}_{w,0}$ and to ensure $\varphi'$ does not hold on negative words, we have $\Psi_{\negatives}\coloneqq\bigwedge_{w\in\negatives} \neg y^{\bound}_{w,0}$.

Next, to construct $\Psi_{\not\leftarrow\varphi}$, we symbolically encode a word $u$  that distinguishes formulas $\varphi$ and $\varphi'$. 
We bound the length of the symbolic word by a \emph{time horizon} $\horizon=2^{2^{\bound+1}}$.
The choice of $\horizon$ is derived from Lemma~\ref{lem:time-horizon} and the fact that the size of the equivalent DFA for an LTL\textsubscript{f} formulas can be at most doubly exponential~\citep{DBLP:conf/ijcai/GiacomoV15}.

Our encoding of a symbolic word $u$ relies on variables $p_{t,a}$ where $t\in\upto{\horizon}$ and $a\in\Sigma\cup\{\varepsilon\}$.
If $p_{t,a}$ is set to true, then $u[t]=a$. To ensure that the variables $p_{t,a}$ encode their desired meaning, we generate a formula $\Psi_{\mathtt{word}}$ that consists of the following constraint:
\begin{align}
\bigwedge_{t\in\upto{\horizon}}\Big[\bigvee_{a\in\Sigma\cup\{\varepsilon\}} p_{t,a} \wedge \bigwedge_{a\neq a'\in\Sigma\cup\{\varepsilon\}} \big[\neg p_{t,a} \vee \neg p_{t,a'}\big]\Big]\label{eq:symbol-uniqueness}
\end{align}
The above constraint ensures that, in the word $u$, each position $t\leq\horizon$ has a unique symbol from $\Sigma\cup\{\varepsilon\}$.

Further, to track whether $\varphi$ and $\varphi'$ hold on $u$, we have variables $z^{\varphi,i}_{u,t}$ and $z^{\varphi',i}_{u,t}$ where $i\in\upto{n}$, $t\in\upto{\horizon}$.
These variables are similar to $y^i_{w,t}$, in the sense that $z^{\varphi,i}_{u,t}$ (resp. $z^{\varphi',i}_{u,t}$) is set to true, if $\varphi$ (resp. $\varphi'$) holds at position $t$.
To ensure desired meaning of these variables, we impose semantic constraints $\Psi_{\mathtt{sem}}$, similar to the semantic constraints imposed on $y^{i}_{w,t}$.
We list the constraints on the variables $z^{\varphi',i}_{u,t}$.
\begin{align}
\bigwedge\limits_{i\in\upto{n}}\bigwedge\limits_{a\in \Sigma} x_{i,a}\rightarrow\Big[\bigwedge\limits_{t\in\upto{\horizon}}
z^{\varphi',i}_{u,t} \leftrightarrow p_{t,a}\Big]\\
\bigwedge\limits_{\substack{i\in\upto{n}\\ j,j'\in\upto{i-1}}}x_{i,\vee}\wedge l_{i,j}\wedge r_{i,j'}\rightarrow\Big[\bigwedge\limits_{\substack{t\in\upto{\horizon}}}\Big[z^{\varphi',i}_{w,t}\leftrightarrow z^{\varphi',j}_{u,t}\vee z^{\varphi',j'}_{u, t}\Big]\Big]\\
\bigwedge\limits_{\substack{i\in \upto{n}\\j\in\upto{i-1}}} x_{i,\lX} \land l_{i,j}\rightarrow\big[\bigwedge_{t\in\upto{\horizon-1}} z^{\varphi',i}_{w,t} \leftrightarrow z^{\varphi',j}_{u,t+1}\big] 
\wedge \neg z^{\varphi',j}_{u,\abs{w}}\\
\bigwedge\limits_{\substack{i \in\upto{\bound} \\ j, j'\in\upto{i}}}x_{i,\lU}\wedge l_{i,j}\wedge r_{i,j'}\rightarrow\Big[\bigwedge\limits_{\substack{t\in\upto{ \horizon}}}\Big[z^{\varphi',i}_{u,t}\leftrightarrow
\bigvee\limits_{t\leq t'\in\upto{\horizon}}\Big[z^{\varphi',j'}_{u,t'}\wedge  \bigwedge\limits_{t\leq \tau< t'} z^{\varphi',j}_{u,\tau}\Big]\Big]
\end{align}
Clearly, the above constraints are identical to the Constraints~\ref{eq:ltl-prop} to~\ref{eq:ltl-until} imposed on variables $y^{i}_{w,t}$.
The constraints on variables $z^{\varphi,i}_{w,t}$ are quite similar.
The only difference is that for hypothesis $\varphi$, we know the syntax DAG exactly and thus, can exploit it instead of using an encoding of the syntax DAG.

Finally, we set $\Psi_{\not\leftarrow\varphi}\coloneqq\Psi_{\mathtt{word}}\wedge\Psi_{\mathtt{sem}}\wedge z^{\varphi,\bound}_{u,0}\wedge \neg z^{\varphi',\bound}_{u,0}$.
Intuitively, the above conjunction ensures that there exists a word on which $\varphi$ holds and $\varphi'$ does not. 

We now prove the correctness of the encoding $\ltlencoding$ described using the constraints above.
\begin{theorem}\label{thm:ltl-encoding-correctness}
	Let $\ltlencoding$ be as defined above. Then, we have the following:
	\begin{enumerate}
		\item If $\ltlencoding$ is satisfiable, then there exists an LTL formula $\varphi'$ that is an $\nrep$, $\varphi'$ does not hold on $w\in\negatives$ and $\varphi\not\rightarrow\varphi'$.
		\item If there exists a LTL formula $\varphi'$ that is an $\nrep$, $\varphi'$ does not hold on $w\in\negatives$ and $\varphi\not\rightarrow\varphi'$, then $\ltlencoding$ is satisfiable. 
	\end{enumerate}
\end{theorem}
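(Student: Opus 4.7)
The plan is to mirror the template used for Theorem~\ref{thm:dfa-encoding-correctness}: establish a handful of intermediate claims about the propositional variables of $\ltlencoding$ by structural induction, and then read off both directions as corollaries. As a structural preliminary, the constraints of $\Psi_{\mathtt{LTL}}$ (Constraints~\ref{eq:op-uniqueness}--\ref{eq:first-node-symbol}) force the $x_{i,\lambda}$, $l_{i,j}$, $r_{i,j}$ variables under any model $v$ to describe a well-formed syntax DAG with identifiers in $\upto{\bound}$, hence a unique LTL\textsubscript{f} formula $\varphi'$ of size at most $\bound$; conversely, any such syntax DAG yields a satisfying assignment of $\Psi_{\mathtt{LTL}}$. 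This is essentially the encoding of~\citet{NeiderG18}, so I would cite rather than redo it.

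The heart of the proof is a semantic claim proved by induction on the identifier $i$. Because every parent node has a strictly larger identifier than its children, induction on $i$ coincides with induction on the subformula order. The claim reads: for every model $v$ of $\ltlencoding$, every $w\in\positives\cup\negatives$, every $i\in\upto{\bound}$, and every $t\in\upto{\abs{w}}$, $v(y^i_{w,t})=1$ iff $w[t{:}]\models\varphi'_i$, where $\varphi'_i$ denotes the subformula at Node~$i$. The base case is Node~$i$ labeled by a symbol, handled by Constraint~\ref{eq:ltl-prop}; the inductive step is a case split on the operator labeling Node~$i$ read off from Constraints~\ref{eq:ltl-or}--\ref{eq:ltl-until} for $\lor$, $\lX$, $\lU$ (with analogous clauses for the sugar operators in $\operators$, or by reducing them to the basic set). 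An entirely parallel induction shows $v(z^{\varphi',i}_{u,t})=1$ iff $u[t{:}]\models\varphi'_i$, where $u\in(\Sigma\cup\{\varepsilon\})^{\horizon}$ is the word read off from $p_{t,a}$ under $\Psi_{\mathtt{word}}$; the same holds for $z^{\varphi,i}_{u,t}$ over the fixed syntax DAG of $\varphi$.

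With these claims in hand, the forward direction is routine: a model $v$ yields $\varphi'$; $\Psi_{\positives}$ and the semantic claim give $w\models\varphi'$ for every $w\in\positives$, so $\varphi'$ is an $\nrep$; $\Psi_{\negatives}$ gives $w\not\models\varphi'$ for every $w\in\negatives$; and the literals $z^{\varphi,\bound}_{u,0}\wedge\neg z^{\varphi',\bound}_{u,0}$ deliver a concrete $u$ with $u\models\varphi$ but $u\not\models\varphi'$, hence $\varphi\not\rightarrow\varphi'$. For the reverse direction, given a witness $\varphi'$ I would assemble $v$ explicitly: set $x_{i,\lambda}$, $l_{i,j}$, $r_{i,j}$ from the identifier-labeled syntax DAG of $\varphi'$, set each $y^i_{w,t}$ (respectively $z^{\varphi,i}_{u,t}$ and $z^{\varphi',i}_{u,t}$) to the truth value of the corresponding subformula on the appropriate suffix, and set $p_{t,a}$ from a shortest word separating $\varphi$ and $\varphi'$. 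A straightforward check then confirms each of Constraints~\ref{eq:op-uniqueness}--\ref{eq:ltl-until} and the analogous constraints for the $z$-variables.

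The main obstacle is ensuring that a separating word of length at most $\horizon=2^{2^{\bound+1}}$ exists whenever $\varphi\not\rightarrow\varphi'$, so that $p_{t,a}$ can faithfully represent it. I would discharge this by translating $\varphi$ and $\varphi'$ into equivalent DFAs of at most doubly-exponential size via~\citet{DBLP:conf/ijcai/GiacomoV15}, applying Lemma~\ref{lem:time-horizon} to the product automaton for $L(\dfa_\varphi)\setminus L(\dfa_{\varphi'})$ to extract a short separating word, and $\varepsilon$-padding it to exact length $\horizon$ so that the $p_{t,a}$ variables over alphabet $\Sigma\cup\{\varepsilon\}$ can encode it. Everything else is a routine operator-by-operator case check directly paralleling the DFA proof of Theorem~\ref{thm:dfa-encoding-correctness}.
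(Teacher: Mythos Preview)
Your proposal is correct and follows essentially the same route as the paper: two semantic claims about the $y^{i}_{w,t}$ and $z^{\varphi,i}_{u,t}$, $z^{\varphi',i}_{u,t}$ variables proved by structural induction on the syntax DAG (the paper cites \citet{NeiderG18} for the first and states the second analogously), followed by the forward and reverse directions exactly as you outline. If anything, you are more careful than the paper in the reverse direction, since you explicitly justify the existence of a separating word of length at most $\horizon$ via the doubly-exponential DFA translation and Lemma~\ref{lem:time-horizon}, whereas the paper's proof simply picks a distinguishing word without arguing it fits within the horizon.
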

To prove this theorem, we rely on intermediate claims, which we state now.
In the claims, $v$ is a model of $\ltlencoding$, $\varphi'$ is the LTL formula constructed from $v$ and $\varphi$ is the current hypothesis LTL formula.
\begin{numclaim}\label{cl:ltl-positive}
	For all $w\in\positives\cup\negatives$, $v(y^{i}_{w,t})=1$ if and only if $w[t{:}]\models\varphi'[i]$. 
\end{numclaim}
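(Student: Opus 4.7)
The plan is to proceed by strong induction on the identifier $i \in \upto{n}$, exploiting the key structural property of syntax DAGs: the children of Node~$i$ all have identifiers strictly smaller than $i$. Hence induction on $i$ effectively performs structural induction on the subformula $\varphi'[i]$, and the uniqueness constraints (\ref{eq:op-uniqueness}), (\ref{eq:left-child-uniqueness}), (\ref{eq:right-child-uniqueness}) guarantee that at each node the operator and (when applicable) the children are well-defined from the model $v$. Since all semantic constraints (\ref{eq:ltl-prop})--(\ref{eq:ltl-until}) are biconditionals guarded by $x_{i,\lambda}\wedge l_{i,j}\wedge r_{i,j'}$, a single biconditional induction will handle both directions of the claim simultaneously.

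\emph{Base case ($i=1$).} Constraint~(\ref{eq:first-node-symbol}) together with (\ref{eq:op-uniqueness}) forces $\varphi'[1]$ to be a unique symbol $a \in \Sigma$. Constraint~(\ref{eq:ltl-prop}) then equates $v(y^{1}_{w,t})=1$ with $w[t]=a$, which by the semantics of LTL\textsubscript{f} is exactly $w[t{:}]\models a$.

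\emph{Inductive step.} Fix $i>1$, assume the claim for all $j<i$, and split on the unique label $\lambda$ of Node~$i$. If $\lambda \in \Sigma$, reuse the base-case argument. If $\lambda=\vee$, read off the unique children $j,j'<i$ from (\ref{eq:left-child-uniqueness})--(\ref{eq:right-child-uniqueness}), invoke (\ref{eq:ltl-or}) to get $v(y^{i}_{w,t})=1 \iff v(y^{j}_{w,t})=1 \vee v(y^{j'}_{w,t})=1$, and apply the inductive hypothesis to both children to translate this into the semantic disjunction for $\varphi'[j]\vee\varphi'[j']$. The $\lX$ case proceeds analogously via (\ref{eq:ltl-next}), but with extra attention to the boundary conjunct $\neg y^{j}_{w,|w|}$, which encodes the LTL\textsubscript{f}-specific requirement that $t<|w|$ and prevents $\lX\psi$ from holding at the final position.

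\emph{Anticipated main obstacle.} The $\lU$ case is the most delicate, since Constraint~(\ref{eq:ltl-until}) combines an existential quantifier over a future witness $t'$ with a conjunction over intermediate positions $\tau$. I would unfold the right-hand side of (\ref{eq:ltl-until}) and the semantic definition of $w,t \models \varphi'[j] \lU \varphi'[j']$ side by side, carefully verifying that the ranges $t \le t' \le |w|$ and $t \le \tau < t'$ in the encoding match the semantic clause exactly (in particular the strict-until convention, where $\varphi'[j]$ need not hold at the witness $t'$ itself). The inductive hypothesis applied at $j$ and $j'$ then turns each atomic $v(y^{j}_{w,\tau})$ and $v(y^{j'}_{w,t'})$ into the corresponding semantic statement, closing the biconditional at Node~$i$.
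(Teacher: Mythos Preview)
Your proposal is correct and follows exactly the approach the paper indicates: the paper states only that ``the proof proceeds via structural induction over $\varphi'[i]$'' and defers the details to \citet{NeiderG18}, which is precisely the induction on the node identifier you carry out. Your handling of the base case and the $\vee$, $\lX$, $\lU$ cases matches the intended argument; the remaining operators in $\Lambda$ (e.g.\ $\neg$, $\land$, $\lF$, $\lG$) are handled by entirely analogous constraints and require no new ideas.
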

The proof proceeds via structural induction over $\varphi'[i]$.
For the proof of this claim, we refer the readers to the correctness proof of the encoding used by~\citep{NeiderG18} that can in found in their full paper~\citep{ng-arxiv-v1}.

\begin{numclaim}\label{cl:ltl-ineq}
	$v(z^{\varphi',i}_{w,t})=1$ (resp. $v(z^{\varphi,i}_{w,t})=1$) if and only for a word $w$, $w[t{:}]\models \varphi'$ (resp. $w[t{:}]\models \varphi$).
\end{numclaim}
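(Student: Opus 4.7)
The plan is to prove both halves by structural induction on the subformula at Node~$i$, or equivalently, by induction on the identifier $i$ (since, by construction of the syntax DAG, all children of Node~$i$ carry strictly smaller identifiers than $i$). The argument is completely parallel to the proof of Claim~\ref{cl:ltl-positive}; the only conceptual difference is that the word is not supplied as input but is extracted from the model $v$: Constraint~\ref{eq:symbol-uniqueness} guarantees that, for every $t\in\upto{\horizon}$, exactly one variable $p_{t,a}$ is set to $1$ by $v$, so the word $u$ defined by $u[t]=a \iff v(p_{t,a})=1$ is well-defined. I will prove the claim for this $u$.

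For the $\varphi'$-part, the base case is Node~$i$ labelled by a symbol $a\in\Sigma$, i.e., $v(x_{i,a})=1$. The first constraint of $\Psi_{\mathtt{sem}}$ then forces $v(z^{\varphi',i}_{u,t})=v(p_{t,a})$ for every $t$, so $v(z^{\varphi',i}_{u,t})=1$ iff $u[t]=a$ iff $u[t{:}]\models \varphi'[i]$. For the inductive step, assume the equivalence holds for all $j<i$ (applied to the unique children of Node~$i$ picked out by the variables $l_{i,j}$ and $r_{i,j}$, whose uniqueness is ensured by Constraints~\ref{eq:left-child-uniqueness}--\ref{eq:right-child-uniqueness}). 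One then does a case split on the operator label at Node~$i$: for each case ($\lor$, $\lX$, $\lU$, and any remaining operators such as $\neg$ if present), the corresponding constraint of $\Psi_{\mathtt{sem}}$ literally mirrors the LTL\textsubscript{f} semantic clause defining that operator, so combining it with the induction hypothesis immediately yields the equivalence.

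For the $\varphi$-part, the proof is simpler because the syntax DAG of $\varphi$ is known: rather than dispatching on which $x_{i,\lambda}$ variable is true, one inspects the fixed label at each node directly, and the same structural induction goes through using the variants of Constraints~\ref{eq:ltl-prop}--\ref{eq:ltl-until} that were imposed on the $z^{\varphi,i}_{u,t}$ variables.

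The main obstacle will be the $\lU$ case, because both directions of the equivalence require converting between the bounded disjunction $\bigvee_{t\leq t'\leq \horizon}$ appearing in the constraint and the semantic clause $\exists\, j$ with $t\le j\le|u|$. Since $|u|\leq\horizon$ by construction, this disjunction is indexed over a range that is guaranteed to contain every relevant witness, so the translation is sound; still, one should argue carefully that (i) any semantic witness position $j$ is $\leq\horizon$ and thus appears in the disjunction, and (ii) any index $t'$ satisfying the disjunct actually corresponds to a valid position in $u$ (this is where the use of the padding symbol $\varepsilon$ in $\Sigma\cup\{\varepsilon\}$ and the handling of the end-of-word behaviour for $\lX$, as encoded by $\neg z^{\varphi',j}_{u,\abs{w}}$, must be reconciled with the definition of $|u|$). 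Apart from this bookkeeping around the finite-trace boundary, the induction is entirely routine.
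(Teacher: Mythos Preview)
Your proposal is correct and follows exactly the approach the paper indicates: the paper's proof of this claim is a single sentence stating that it ``proceeds via a structural induction on $\varphi'$ (similar to the previous one),'' and your write-up fleshes out precisely that induction, with the same reliance on the semantic constraints mirroring the LTL\textsubscript{f} clauses. Your additional care about extracting the word $u$ from the $p_{t,a}$ variables and about the finite-trace boundary in the $\lU$ case goes beyond what the paper spells out, but is entirely in the spirit of the intended argument.
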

The proof again proceeds via a structural induction on $\varphi'$ (similar to the previous one).

We are now ready to prove Theorem~\ref{thm:ltl-encoding-correctness}, i.e., the correctness of $\ltlencodingce$.

\begin{proof}[of Theorem~\ref{thm:ltl-encoding-correctness}]
	For the forward direction, consider that $\ltlencoding$ is satisfiable with a model $v$ and $\varphi'$ is the LTL formula constructed using the model $v$.
	First, using Claim~\ref{cl:ltl-positive}, we have that $v(y^{n}_{w,t})=1$ if and only if $w[t{:}]\models\varphi'$.
	Now, based on the constraints $\Psi_{\positives}$ and $\Psi_{\negatives}$, we observe that $v(y^{n}_{w,0})=1$ for all words $w\in\positives$ and $v(y^{n}_{w,0})=0$ for all words $w\in\negatives$. 
	Thus, combining the two observations, we conclude $w\models\varphi'$ for $w\in\positives$ and $w\not\models\varphi'$ for $w\in\negatives$ and hence, $\varphi'$ is an $\nrep$.
	Next, using Claim~\ref{cl:ltl-ineq} and conjunct $\Psi_{\not\leftarrow\varphi}$, we conclude that there exists a word $w$ on which $\varphi$ holds and $\varphi'$ does not.
	Thus, in total, we obtain $\varphi'$ to be an $\nrep$ which does not hold on $w\in\negatives$ and $\varphi\not\rightarrow\varphi'$ 
	
	For the other direction, based on a suitable LTL formula $\varphi'$, we construct an assignment $v$ for all the introduce propositional variables. 
	First, we set $v(x_{i,\lambda})=1$ if Node~$i$ is labeled with operator $\lambda$ and $v(l_{i,j})=1$ (resp. $v(r_{i,j})=1$) if left (resp. right) child of Node~$i$ is Node~$j$.
	Since $\varphi$ is a valid LTL formula, it is clear that the structural constraints will be satisfied by $v$.
	Similarly, we set $v(y^{i}_{w,t})=1$ if $w[t{:}]\models\varphi'[i]$ for some $w\in\positives\cup\negatives$, $t\in\upto{\abs{w}}$.
	It is again a simple exercise to check that $v$ satisfies Constraints~\ref{eq:ltl-prop} to~\ref{eq:ltl-until}.
	Next, we set $v(z^{\varphi}_{w,t})=1$ and $v(z^{\varphi',n}_{w,t})=0$ for a word $w$ for which $w\models\varphi$ and $w\not\models\varphi'$.
	It is again easy to check that $v$ satisfies $\Psi_{\not{\leftarrow}\varphi}$.
\end{proof}

We now prove the termination and correctness (Theorem~\ref{thm:semi-symbolic-ltl-correctness}) of Algorithm~\ref{alg:hybrid-ltl}.

\begin{proof}[of Theorem~\ref{thm:semi-symbolic-ltl-correctness}]
First, observe that Algorithm~\ref{alg:hybrid-ltl} always terminates. 
This is because, there are only finitely many LTL formulas with a given size bound $\bound$ and, as we show next, the algorithm produces a new LTL formula as a hypothesis in each iteration.
To show that new LTL formula are produced, assume that the algorithm finds same hypothesis in the iterations $k$ and $l$ (say $i<j$) of its (while) loop.
For clarity, let the hypothesis produced in an iteration $i$ be indicated using $\varphi_i$.
We now argue using case analysis.
Consider the case where in all iterations between $i$ and $j$, the condition  $\varphi'\rightarrow\varphi$ (Line~\ref{line:subset}) holds.
In such a case, $\varphi_l\rightarrow\varphi_k$ and $\varphi_k\not\rightarrow\varphi_l$ and thus, $\varphi_l\neq \varphi_k$, contradicting the assumption.
Now, consider the other case where, in at least one of the iterations between $i$ and $j$, the (else) condition $\varphi'\not\rightarrow\varphi$ (Line~\ref{line:incomparable}) holds.
In such a case, a word $w$ on which $\varphi_k$ holds is added to the set of negative words $\negatives$.
Since $\varphi_l$ must not hold on this negative word $w$, $\varphi_l\neq \varphi_k$, again contradicting the assumption.

Next, we prove that the LTL formula $\varphi$ learned by the algorithm is the correct one, using contradiction.
We assume that there is an LTL formula $\overline{\varphi}$ that is an $\nrep$ and satisfies the conditions $\overline{\varphi}\rightarrow \varphi$ and $\varphi\not\rightarrow\overline{\varphi}$.
Now, observe that the algorithm terminates when $\ltlencoding$ is unsatisfiable.
Based on the properties of $\ltlencoding$ established in Theorem~\ref{thm:ltl-encoding-correctness}, if $\ltlencoding$ is unsatisfiable for some LTL formula $\varphi$ and set of negative words $\negatives$, then any LTL formula $\varphi'$ that is an $\nrep$ either it holds in one of the words in $\negatives$ or $\varphi\rightarrow\varphi'$.
If $\varphi\rightarrow\overline{\varphi}$, then the assumption $\varphi\not\rightarrow\overline{\varphi}$ is contradicted.
If $\overline{\varphi}$ holds in one of the negative words, the assumption $\overline{\varphi}\rightarrow \varphi$ is contradicted.
\end{proof}

\subsection{The Counterexample-guided Algorithm}\label{sec:ce-ltl}
We now design a \emph{counterexample-guided} algorithm to solve Problem~\ref{prob:occ-ltl}.
In contrast to the symbolic (or semi-symbolic) algorithm, this algorithm does not guide the search based on propositional formulas built out of the hypothesis LTL\textsubscript{f} formula.
Instead, this algorithm relies entirely on two sets: a set of negative words $\negatives$ and a set of discarded LTL\textsubscript{f} formulas $\discardpile$.
Based on these two sets, we design a propositional formula $\ltlencodingce$ that has the properties that:
\begin{enumerate*}
	\item $\ltlencodingce$ is satisfiable if and only if there exists an LTL\textsubscript{f} formula $\varphi$ that is an $\nrep$, does not hold on $w\in\negatives$, and is not one of the formulas in $D$; and 
	\item based on a model $\model$ of $\ltlencodingce$, one can construct such an LTL\textsubscript{f} formula $\varphi'$.
\end{enumerate*}

\begin{algorithm}[tb]
	\caption{CEG Algorithm for LTL\textsubscript{f} formulas }\label{alg:ce-ltl}
	\textbf{Input}: Positive words~$P$, bound~$\bound$
	\begin{algorithmic}[1]
		\STATE $N\gets\emptyset$, $D \gets \emptyset$ 
		\STATE $\varphi\gets\varphi_{\Sigma^\ast}$, $\ltlencodingce \coloneqq \Psi_{\mathtt{LTL}}\wedge\Psi_{\positives}$
		\WHILE{$\ltlencodingce$ is satisfiable (with model $\model$)}
		\STATE $\varphi' \gets \varphi^{v}$
		\IF{$\varphi'\leftrightarrow\varphi$}
		\STATE Add $\varphi'$ to $D$
		\ELSE			
		\IF{$\varphi'\rightarrow\varphi$}
		\STATE Add $w$ to $\negatives$, where  $w\models\neg\varphi\wedge\varphi'$
		\STATE $\varphi\gets\varphi'$
		\ELSE
		\STATE Add $w$ to $\negatives$, where $w\models\neg\varphi'\wedge\varphi$ 
		\ENDIF
		\ENDIF
		\STATE $\ltlencodingce \coloneqq \Psi_{\mathtt{LTL}}\wedge \Psi_{\positives}\wedge\Psi_{\negatives}\wedge\Psi_{D}$
		\ENDWHILE
		\RETURN $\varphi$
	\end{algorithmic}
\end{algorithm}

Being a counterexample-guided algorithm, the construction of the sets $\negatives$ and $\discardpile$ forms the crux of the algorithm.
In each iteration, these sets are updated based on the relation between the hypothesis $\varphi$ and the current guess $\varphi'$ (obtained from a model of $\ltlencodingce$).
There are exactly three relevant cases, which we discuss briefly.
\begin{itemize}
	\item First, $\varphi'\leftrightarrow\varphi$, i.e., $\varphi'$ and $\varphi$ hold on the exact same set of words. In this case, the algorithm discards $\varphi'$, due to its equivalence to $\varphi$, by adding it to $D$.
	\item Second, $\varphi'\rightarrow\varphi$ and $\varphi\not\rightarrow\varphi'$, i.e., $\varphi'$ holds on a proper subset of the set of words on which $\varphi$ hold. In this case, our algorithm generates a word that satisfies $\varphi$ and not $\varphi'$, which it adds to $\negatives$ to eliminate $\varphi$.
	\item Third, $\varphi'\not\leftarrow\varphi$, i.e., $\varphi'$ does not hold on a subset of the set of words on which $\varphi$ hold. In this case, our algorithm generates a word $w$ that satisfies $\varphi'$ and not $\varphi$, which it adds to $\negatives$ to eliminate $\varphi'$. 
\end{itemize}
By handling the cases mentioned above, we obtain an algorithm (sketched in Algorithm~\ref{alg:ce-ltl}) with guarantees (formalized in Theorem~\ref{thm:semi-symbolic-ltl-correctness}) exactly the same as the semi-symbolic algorithm in Section~\ref{sec:semi-sym-ltl}.

\section{Experiments}
\label{sec:experiments}

In this section, we evaluate the performance of the proposed algorithms using three case studies.
First, we evaluate the performance of Algorithm~\ref{alg:symbolic-dfa}, referred to as \symbolicDfa{}, and compare it to a baseline counterexample-guided algorithm by \citet{AvellanedaP18}, referred to as \ceDfa{} in our first case study.
Then, we evaluate the performance of the proposed semi-symbolic algorithm (Section~\ref{sec:semi-sym-ltl}), referred to as \hybridLtl{}, and the counterexample-guided algorithm (Section~\ref{sec:ce-ltl}), referred to as \ceLtl{}, for learning LTL\textsubscript{f} formulas in our second and third case studies.

\begin{figure}[h]
	\begin{center}
		\includegraphics[width=0.8\columnwidth]{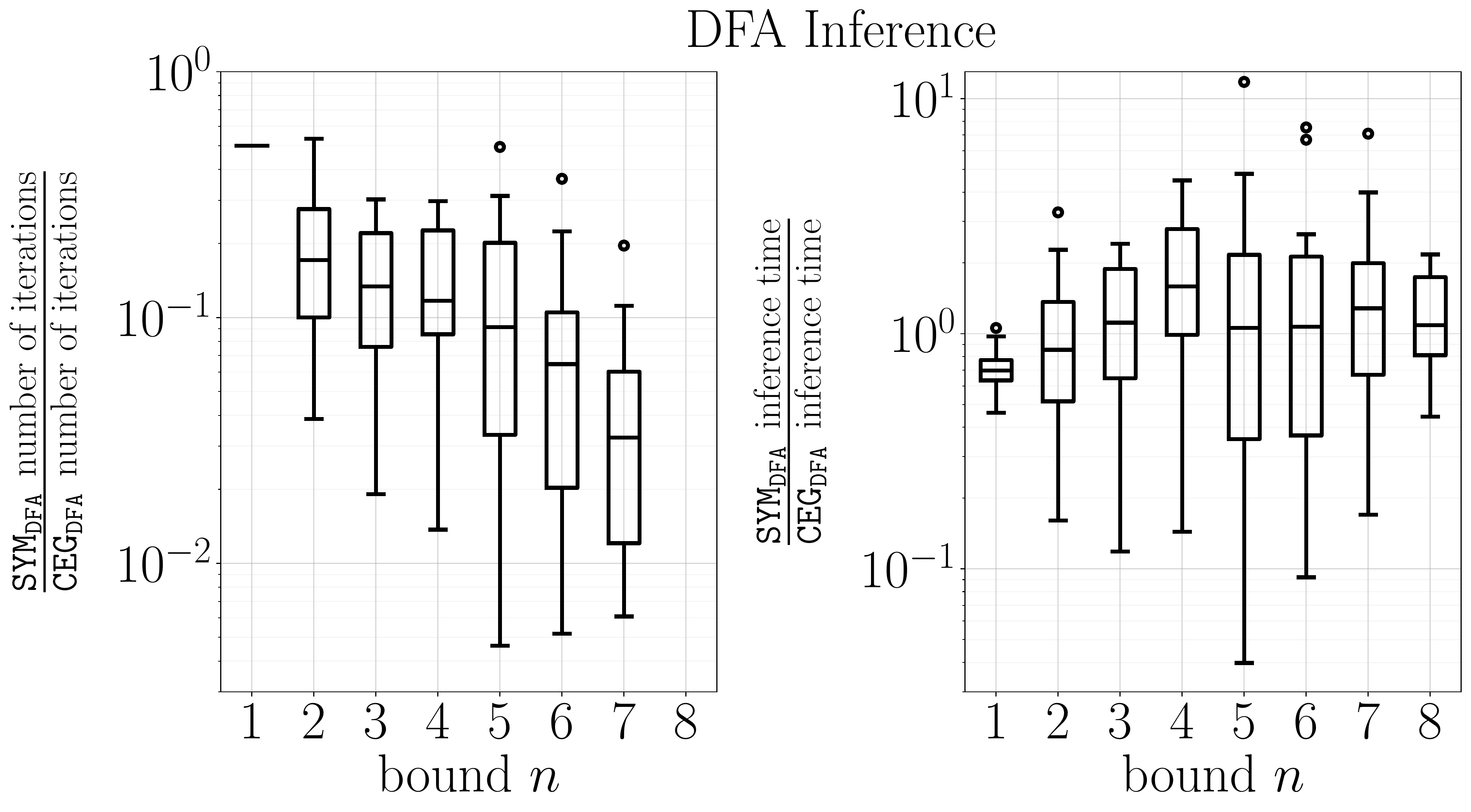}
		\caption{
			Comparison of \symbolicDfa{} and \ceDfa{} in terms of the runtime and the number of iterations of the main loop.
		}  
		\label{fig:DFA}
	\end{center}
\end{figure}

\begin{table*}[]
	\centering
	\begin{tabular}{c c c}
		\toprule
		absence
		&
		existence
		&
		universality
		\\ \hline
		$\lG(\lnot{a_0})$
		&
		$\lF{a_0}$
		&
		$\lG{a_0}$
		\\ 
		$\lG(a_1\rightarrow{\lG(\lnot{a_0})})$
		&
		$(\lG(\lnot{a_0}))\lor(\lF(a_0\land (\lF{a_1})))$
		&
		$\lG(a_1\rightarrow{\lG{a_0}})$
		\\ 
		$\lF{a_1}\rightarrow{(\lnot a_0\lU{a_1}})$
		&
		$\lG(a_0\land ((\lnot{a_1})\rightarrow((\lnot{a_1})\lU(a_2\land (\lnot{a_1})))))$
		&
		$\lF{a_1} \rightarrow (a_0 \lU{a_1})$
		
		\\ \midrule
		\multicolumn{3}{c}{disjunction of common patterns}
		\\ \hline
		\multicolumn{3}{c}{$(\lF{a_2})\lor((\lF{a_0})\lor(\lF{a_1}))$}
		\\ 
		\multicolumn{3}{c}{$((\lG(\lnot{a_0}))\lor{ (\lF(a_0\land(\lF{a_1}))))}\lor((\lG(\lnot{a_3})) \lor(\lF(a_2\land(\lF {a_3}))))$}
		\\ 
		\multicolumn{3}{c}{$(\lG(a_0\land((\lnot{a_1})\rightarrow((\lnot{a_1})\lU(a_2\land(\lnot{a_1}))))))\lor(\lG(a_3\land((\lnot{a_4})\rightarrow((\lnot{a_4})\lU(a_5\land(\lnot{a_4}))))))$}
		\\ \bottomrule
	\end{tabular}
	\caption{Common LTL patterns used for generation of words.}
	\label{tab:LTL_tab}
\end{table*}



In \hybridLtl{}, we fixed the time horizon $\horizon$ to a natural number, instead of the double exponential theoretical upper bound of $2^{2^{n+1}}$. 
Using this heuristic means that \hybridLtl{} does not solve Problem \ref{prob:occ-ltl}, but we demonstrate that we produced good enough formulas in practice.


In addition, we implemented two existing heuristics from \citet{AvellanedaP18} to all the algorithms.
First, in every algorithm, we learned models in an incremental manner, i.e., we started by learning DFAs (resp. LTL\textsubscript{f} formulas) of size 1 and then increased the size by 1. 
We repeated the process until bound $\bound$.
Second, we used a set of positive words~$P'$ instead of $P$ that starts as an empty set, 
and at each iteration of the algorithm, if the inferred language does not contain some words from  $P$, we then extended $P'$ with one of such words, preferably the shortest one.
This last heuristic helped when dealing with large input samples because it used as few words as possible from the positive examples $P$.

We implemented every algorithm in Python~3\footnote{\url{https://github.com/cryhot/samp2symb/tree/paper/posdata}},
using PySAT~\citep{imms-sat18} for learning DFA,
and an ASP~\citep{baral_2003} encoding that we solve using clingo~\citep{DBLP:journals/corr/GebserKKS17} for learning LTL\textsubscript{f} formulas.
Overall, we ran all the experiments using 8~GiB of RAM and two CPU cores  with clock speed of 3.6~GHz. 



\subsubsection{Learning DFAs}
\label{ssec:experiment:dfa-random}

For this case study, we considered a set of 28 random DFAs of size 2 to 10 generated using AALpy~\citep{AALpy}.
Using each random DFA, we generated a set of 1000 positive words of lengths 1 to 10. We ran algorithms \ceDfa{} and \symbolicDfa{} with a timeout  $TO=1000$s, and for $\bound$ up to 10.

Figure \ref{fig:DFA} shows a comparison between the performance of \symbolicDfa{} and \ceDfa{} in terms of the inference time and the required number of iterations of the main loop. On the left plot, the average ratio of the number of iterations is $0.14$, which, in fact, shows that \symbolicDfa{} required noticeably less number of iterations compared to \ceDfa{}. On the right plot, the average ratio of the inference time is $1.09$, which shows that the inference of the two algorithms is comparable, and yet \symbolicDfa{} is computationally less expensive since it requires fewer iterations.

\subsubsection{Learning Common LTL\textsubscript{f} Patterns}
\label{ssec:experiment:ltl-patterns}
In this case study, we generated sample words using 12 common LTL patterns~\citep{DwyerAC99}, which we list in Table~\ref{tab:LTL_tab}.
Using each of these 12 ground truth LTL\textsubscript{f} formulas, we generated a sample of 10000 positive words of length 10.
Then, we inferred LTL\textsubscript{f} formulas for each sample using \ceLtl{} and \hybridLtl{}, 
separately.
For both algorithms, we set the maximum formula size $\bound=10$ and a timeout of $TO=1000$s. For \hybridLtl{}, we additionally set the time horizon $\horizon=8$.

Figure~\ref{fig:LTLf} represents a comparison between the mentioned algorithms in terms of inference time for the ground truth LTL\textsubscript{f} formulas
$\psi_1$, $\psi_2$, and
$\psi_3$.
On average, \hybridLtl{} ran 173.9\% faster than \ceLtl{} for all the 12 samples.
Our results showed that the LTL\textsubscript{f} formulas $\varphi$ inferred by \hybridLtl{} were more or equally specific than the ground truth LTL\textsubscript{f} formulas $\psi$ (i.e., $\varphi\rightarrow\psi$) for five out of the 12 samples, while the LTL\textsubscript{f} formulas $\varphi'$ inferred by \ceLtl{} were equally or more specific than the ground truth LTL\textsubscript{f} formulas $\psi$ (i.e., $\varphi'\rightarrow\psi$) for three out of the 12 samples.

\begin{figure}
	\begin{center}
		\includegraphics[width=0.9\columnwidth]{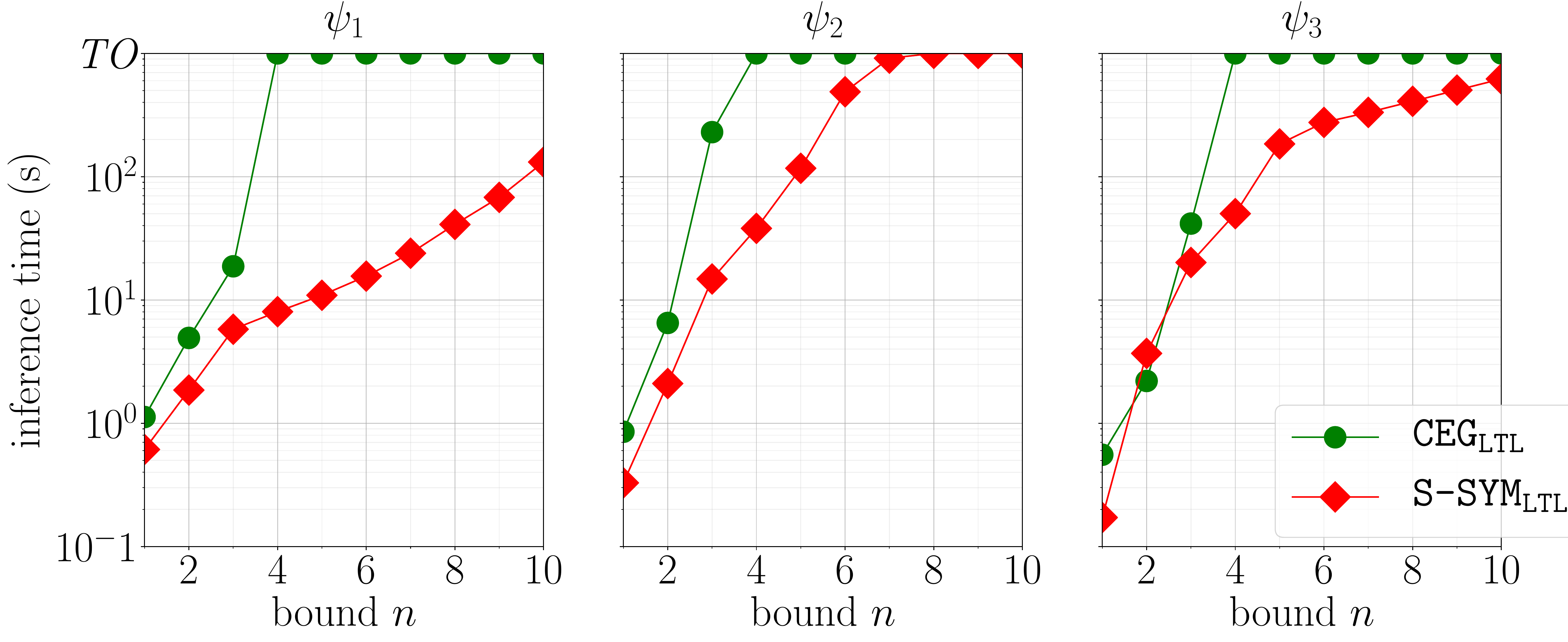}
		\caption{Comparison of \hybridLtl{} and \ceLtl{}{} in terms of the inference time for three LTL ground truth formulas.}
		\label{fig:LTLf}
	\end{center}
\end{figure}

\begin{figure}[h]
	\vspace{-0.4cm}
	\begin{center}
		\includegraphics[width=0.9\columnwidth]{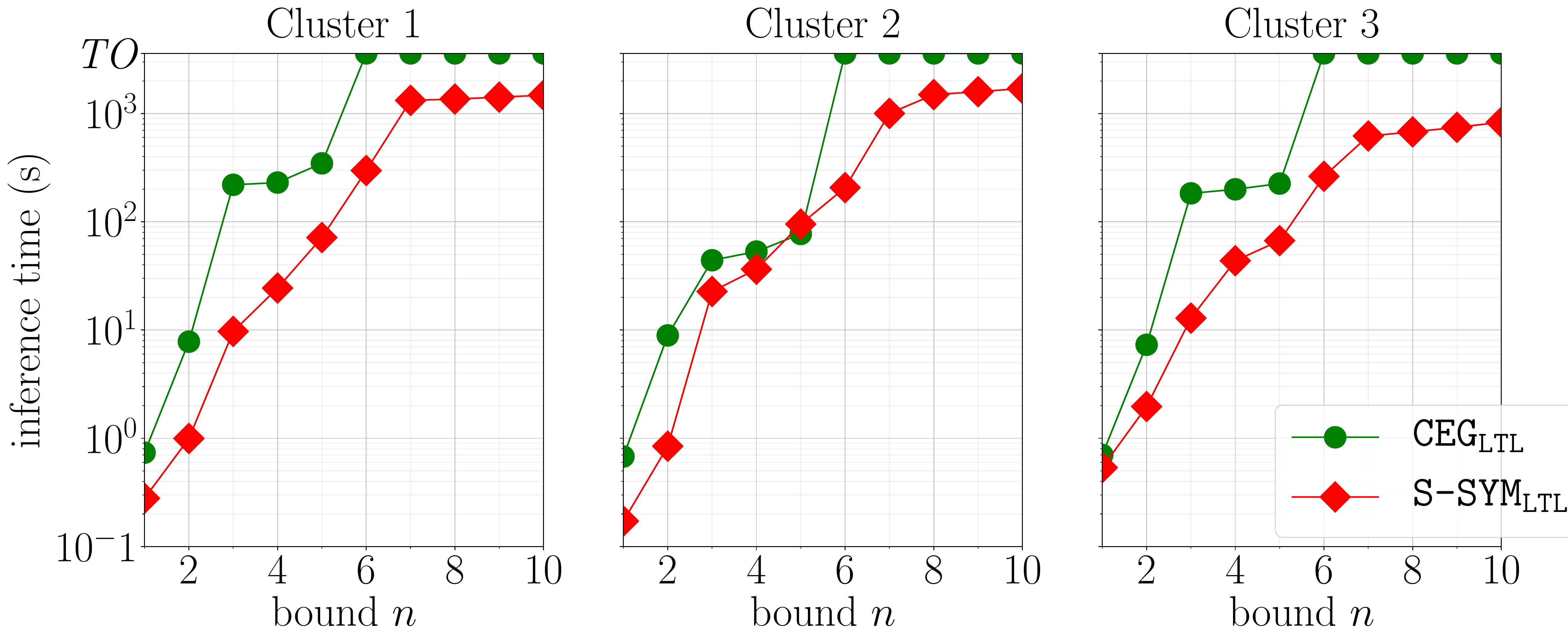}
		\caption{Comparison of \hybridLtl{} and \ceLtl{}{} in terms of the runtime for three clusters of words taken from a UAV.}  
		\label{fig:UAV}
	\end{center}
\end{figure}

\subsubsection{Learning LTL from Trajectories of Unmanned Aerial Vehicle (UAV)}
\label{ssec:experiment:ltl-uav}

In this case study, we implemented \hybridLtl{} and \ceLtl{} using sample words of a simulated unmanned aerial vehicle (UAV) for learning LTL\textsubscript{f} formulas.
Here, we used 10000 words clustered into three bundles using the $k$-means clustering approach.
Each word summarizes selective binary features such as $x_0$: ``low battery'', $x_1$: ``glide (not thrust)'', $x_2$: ``change yaw angle'', $x_3$: ``change roll angle'', etc.
We set $\bound=10$, $\horizon=8$, and a timeout of $TO=3600$s.
We inferred LTL\textsubscript{f} formulas for each cluster using \ceLtl{} and \hybridLtl{}.

Figure~\ref{fig:UAV} depicts a comparison between \ceLtl{} and \hybridLtl{} in terms of the inference time for three clusters.
Our results showed that, on average, \hybridLtl{} is 260.73\% faster than \ceLtl{}.
Two examples of the inferred LTL\textsubscript{f} formulas from the UAV words were $(\lF x_1)\limplies(\lG x_1)$ which reads as ``\textit{either the UAV always glides, or it never glides}'' and $\lG(x_2\limplies{x_3})$ which reads as ``\textit{a change in yaw angle is always accompanied by a change in roll angle}''.


\section{Conclusion}\label{sec:conclusion}
We presented novel algorithms for learning DFAs and LTL\textsubscript{f} formulas from positive examples only.
Our algorithms rely on conciseness and language minimality as regularizers to learn meaningful models.
We demonstrated the efficacy of our algorithms in three case studies.

A natural direction of future work is to lift our techniques to tackle learning from positive examples for other finite state machines (e.g., non-deterministic finite automata) and more expressive temporal logics (e.g., linear dynamic logic (LDL)~\citep{GiacomoV13}).

\section*{Acknowledgments}
We are especially grateful to Dhananjay Raju for introducing us to Answer Set Programming and guiding us in using it to solve our SAT problem.
This work has been supported by
the Defense Advanced Research Projects Agency (DARPA) (Contract number HR001120C0032),
Army Research Laboratory (ARL) (Contract number W911NF2020132 and ACC-APG-RTP W911NF),
National Science Foundation (NSF) (Contract number 1646522), and
Deutsche Forschungsgemeinschaft (DFG) (Grant number 434592664).

%

\bibliographystyle{plainnat}
\bibliography{bib.bib}

\appendix

\section{The Symbolic algorithm for learning DFAs with heuristics}

We present the complete symbolic algorithm for learning DFAs along with the main heuristics (Problem~\ref{prob:occ-dfa}).
The pseudocode is sketched in Algorithm~\ref{alg:symbolic-dfa-full}.
Compared to the algorithm presented in Algorithm~\ref{alg:symbolic-dfa}, we make a few modifications to improve performance.
First, we introduce a set $\positives'$ (also described in Section~\ref{sec:experiments}) to store the set of positive words necessary for learning the hypothesis DFA $\dfa'$.
Second, we incorporate an incremental DFA learning, meaning that we search for DFAs satisfying the propositional formula $\dfaencoding$ of increasing size (starting from size 1).
To reflect this, we extend $\dfaencoding$ with the size parameter, represented using $\Phi^{\dfa,m}$, to search for a DFA of size $m$.

\begin{algorithm}[h]
    \caption{Symbolic Algorithm for Learning DFA (with heuristics)}\label{alg:symbolic-dfa-full}
	\textbf{Input}: Positive words~$P$, bound~$\bound$
	\begin{algorithmic}[1]
		\STATE $\dfa\gets\dfa_{\Sigma^\ast}$,
		$\dfaencoding \gets \Phi_{\mathtt{DFA}}\wedge\Phi_{\positives}$
		\STATE $\positives'\gets \emptyset$,
		$m \gets 1$
		\WHILE{$m\leq\bound$}
            \STATE $\Phi^{\dfa,m} \gets \Phi^{m}_{\mathtt{DFA}}\wedge\Phi_{\positives'} \wedge \Phi_{\subseteq\dfa} \wedge \Phi_{\not\supseteq\dfa}$
            \IF{no model $\model$ satisfies $\Phi^{\dfa,m}$}
		        \STATE $m \gets m+1$
		    \ELSE
		        \STATE $\dfa' \gets$ DFA constructed from $v$
                \IF{exists $w \in \positives \setminus L(\dfa')$}
                    \STATE Add the shortest of such $w$ to $\positives'$
                \ELSE
                    \STATE $\dfa \gets \dfa'$
		        \ENDIF
		    \ENDIF
		\ENDWHILE
		\RETURN $\dfa$
	\end{algorithmic}
\end{algorithm}

\section{Comparison of Symbolic and Semi-symbolic Algorithm for Learning DFAs}

We have introduced counterexample-guided, semi-symbolic and symbolic approaches in this paper.
Our exploration of these methods will not be complete if we did not try a semi-symbolic algorithm for learning DFA.
Hence, we introduce \hybridDfa{}, a semi-symbolic approach for learning DFA.
This is done in a similar fashion than for LTL (Algorithm \ref{alg:hybrid-ltl}), but for DFA instead. Hence, we use the encoding $\dfaencoding \coloneqq \Phi_{\mathtt{DFA}}\wedge\Phi_{\positives} \wedge \Phi_{\negatives} \wedge \Phi_{\not\supseteq\dfa}$.
In practice, \hybridDfa{} is always worse than \symbolicDfa{},
both in term of inference time (in average, $3.2$ times more)
and number of iterations (in average, $2.1$ times more),
as demonstrated in Figure \ref{apdx:fig:DFA}.

\begin{figure}[h]
	\begin{center}
	\includegraphics[width=0.8\columnwidth]{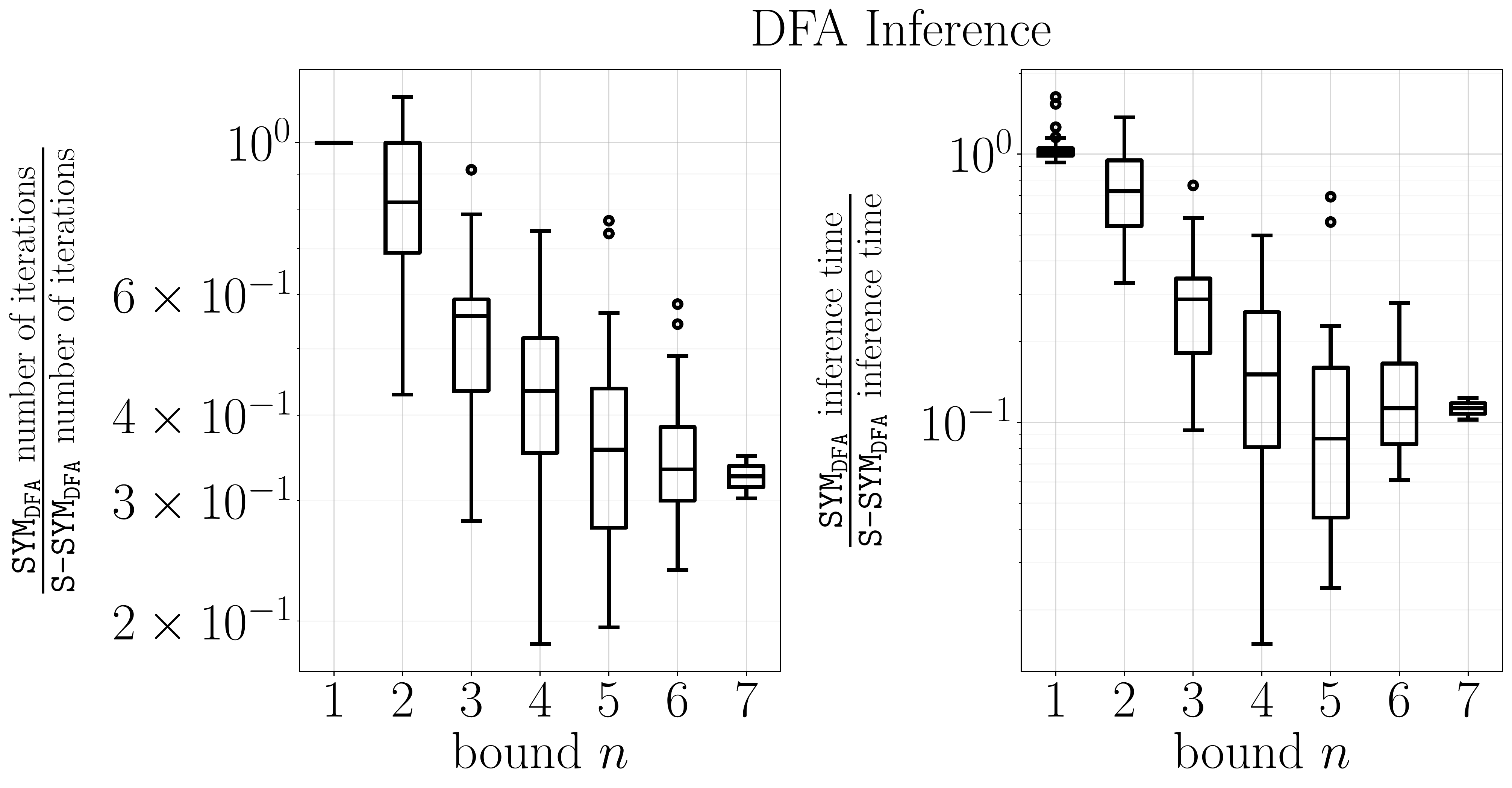}
	\caption{
	Comparison of \symbolicDfa{} and \hybridDfa{} in terms of the inference time and the number of iterations of the main loop.
	}  
	\label{apdx:fig:DFA}
	\end{center}
\end{figure}

\section{A Symbolic Algorithm for learning LTL formulas}
We now describe few modifications to the semi-symbolic algorithm presented in Section~\ref{sec:semi-sym-ltl} to convert it into a \emph{completely} symbolic approach.
This algorithm relies entirely on the hypothesis LTL formula $\varphi$ for constructing a propositional formula $\Psi^{\varphi}$ that guides the search of the next hypothesis.
Precisely, the formula $\Psi^{\varphi}$ has the properties that:
\begin{enumerate*}
\item $\Psi^{\varphi}$ is satisfiable if and only if there exists an LTL formula $\varphi'$ that is an $\nrep$ and $\varphi\not\rightarrow\varphi'$ and $\varphi'\rightarrow\varphi$; and
\item based on a model $\model$ of $\ltlencoding$, one can construct such an LTL formula.
\end{enumerate*}

The algorithm, sketched in Algorithm~\ref{alg:symbolic-ltl}, follows the same framework as Algorithm~\ref{alg:symbolic-dfa}. 
We here make necessary modifications to search for an LTL formula.
Also, the propositional formula $\Psi^{\varphi}$ has a construction similar to $\ltlencodingce$, with the exception that $\Psi_{\negatives}$ is replaced by $\Psi_{\rightarrow \varphi}$.
\begin{algorithm}[tb]
	\caption{Symbolic Algorithm for Learning LTL}\label{alg:symbolic-ltl}
	\textbf{Input}: Positive words~$P$, bound~$\bound$
	\begin{algorithmic}[1]
		\STATE $\varphi\gets\varphi_{\Sigma^\ast}$, $\Psi^{\varphi} \coloneqq \Psi_{\mathtt{LTL}}\wedge\Psi_{\positives}$

		\WHILE{$\Psi^{\varphi}$ is satisfiable (with model $\model$)}
		\STATE $\varphi \gets$ LTL formula constructed from $v$
		\STATE $\Psi^\varphi \coloneqq \Psi_{\mathtt{LTL}}\wedge\Psi_{\positives} \wedge \Psi_{\rightarrow\varphi} \wedge \Psi_{\not\leftarrow\varphi}$
		\ENDWHILE
		\RETURN $\varphi$
	\end{algorithmic}
\end{algorithm}

We here only describe the construction of the conjunct $\Psi_{\rightarrow\varphi}$ which reuses the variables and constraints already introduced in Section~\ref{sec:semi-sym-ltl}.
\begin{align}
    \Psi_{\rightarrow\varphi}\coloneqq\forall_{t\in\upto{\horizon},a\in{\Sigma}} p_{t,a}: \Big[\big[\Psi_{\mathtt{word}}\wedge \Psi_{\mathtt{sem}}\big] \rightarrow z^{\varphi',n}_{w,t}\rightarrow z^{\varphi,n}_{w,t}\Big]
\end{align}
Intuitively, the above constraint says that if for all words $u$ of length $\leq\horizon$, if $\varphi'$ holds on $u$, then so must $\varphi$.

\section{Evaluation of the Symbolic Algorithm for learning LTL formulas}

We refer to this symbolic algorithm for learning LTL formulas (Algorithm~\ref{alg:symbolic-ltl}) as \symbolicLtl{}.
We implement \symbolicLtl{} using QASP2QBF~\cite{https://doi.org/10.48550/arxiv.2108.06405}.
\symbolicLtl{} has an inference time several orders of magnitude above the inference time of \hybridLtl{}, as demonstrated in Figure \ref{apdx:fig:LTLf}.
This can be explained by the choice of the solver, and the inherent complexity of the problem due to quantifiers.
On the third experiment (Section \ref{ssec:experiment:ltl-uav}), \symbolicLtl{} timed out even for $\bound=1$.

\begin{figure}
	\begin{center}
	\includegraphics[width=0.9\columnwidth]{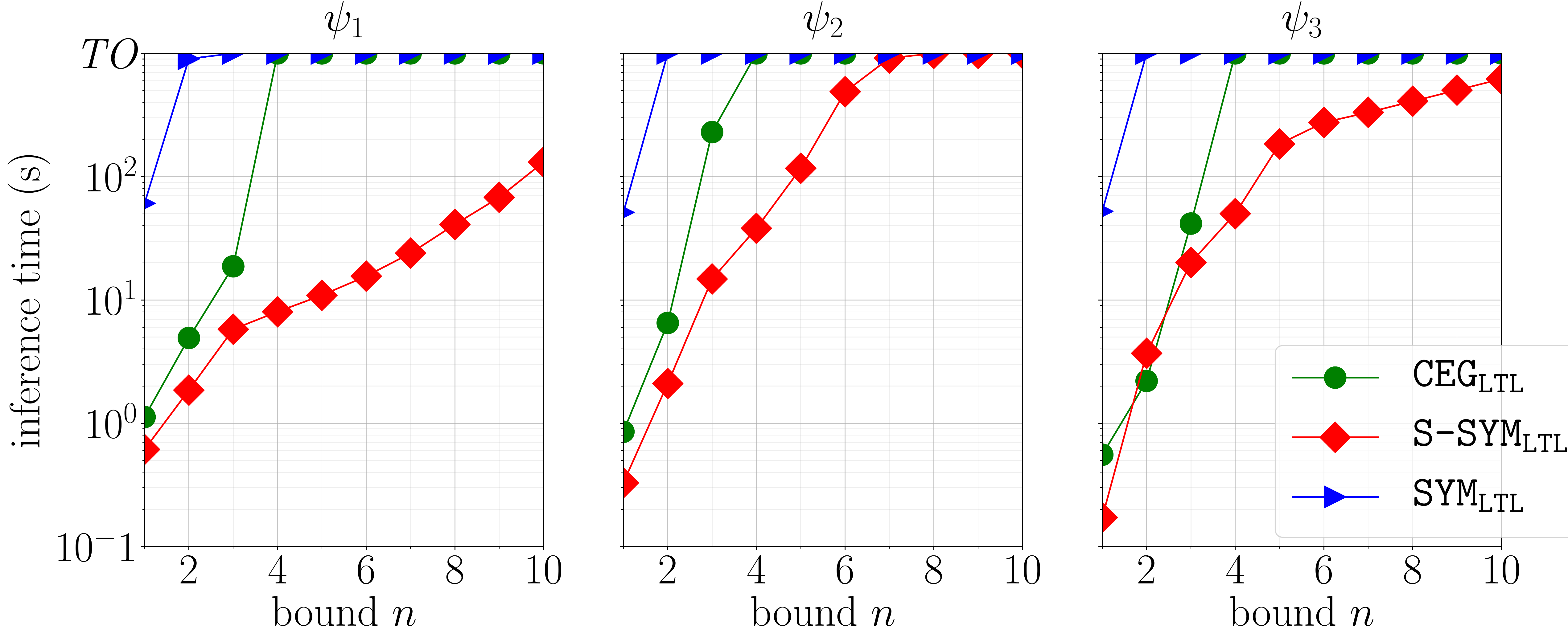}
	\caption{
	For each sample of the second experiment (Section \ref{ssec:experiment:ltl-patterns}), comparison of the inference time between \ceLtl{}, \hybridLtl{} and \symbolicLtl{}.
	}
	\label{apdx:fig:LTLf}
	\end{center}
\end{figure}

\end{document}